\newtheorem{lemma}{\textbf{Lemma}}
\newtheorem{theorem}{\textbf{Theorem}}
\newtheorem{corollary}{\textbf{Corollary}}
\begin{document}
%
\title{Opportunistic Routing for the Vehicular Energy Network}

\author{Albert~Y.S.~Lam and
        Victor~O.K.~Li
\thanks{A.Y.S. Lam is with the Department
of Computer Science, Hong Kong Baptist University, Kowloon Tong, Hong Kong(e-mail: albertlam@ieee.org).}
\thanks{V.O.K. Li is with the Department
of Electrical and Electronic Engineering, The University of Hong Kong, Pokfulam Road, Hong Kong (e-mail: vli@eee.hku.hk).}
}

%


\maketitle

\begin{abstract}
Vehicular energy network (VEN) is a vehicular network which can transport energy over a large geographical area by means of electric vehicles (EVs). In the near future, an abundance of EVs, plentiful generation of the renewables, and  mature wireless energy transfer and vehicular communication technologies will expedite the realization of VEN. 
To transmit energy from a source to a destination, we need to establish energy paths, which are composed of segments of vehicular routes, while satisfying various design objectives. In this paper, we develop a method to construct all energy paths for a particular energy source-destination pair, followed by some analytical results of the method. We describe how to utilize the energy paths to develop optimization models for different design goals and propose two  solutions. We also develop a heuristic for the power loss minimization problem. We compare the performance of the three solution methods with artificial and real-world traffic networks and provide a comprehensive comparison in terms of solution quality, computation time, solvable problem size, and applicability. This paper lays the foundations of VEN routing.

\end{abstract}
\begin{IEEEkeywords}
Electric vehicle, energy path, optimization, routing, vehicular energy network.
\end{IEEEkeywords}

\IEEEpeerreviewmaketitle

\section{Introduction}

\IEEEPARstart{T}{he} smart grid \cite{smartgridintro} is expected to be  more reliable with the support of fault detection and self-healing. Its network topology is more flexible allowing bi-directional energy flows and distributed generation. Power transmission can be more efficient with less redundancy in power lines and higher utilization of generators. It can sustain greater penetration of renewable energy to support various nations' energy mandates. It can also enable new markets by accommodating different operational strategies.
These features of smart grid nurture many novel research ideas and business opportunities. 
However, since the operating power grid requires very high reliability and security, the power grid operators, e.g., regional transmission organizations and independent system operators, would hesitate to integrate the smart grid innovations into the existing power grids. One of the possible ways to bring them to the table is to let the existing power system function as it is and to let the new innovations operate in an environment coupling with the existing system in a loose and flexible manner.
To do this, we need a supplementary power delivery system which is easy to build and can complement the existing power system.

We consider the following phenomena and technological trends:
\begin{itemize}
\item To fight global warming and climate change, many nations have engaged to reduce carbon footprint and to promote alternative energy sources like the renewables. According to \cite{REN21}, many nations have very high renewable acquisitions and renewable energy capacity has grown tremendously.   
\item Electric vehicles (EVs) represent more efficient and greener form of conveyance which can run on existing road infrastructure. Together with the Electric Vehicles Initiative \cite{EVI}, it is expected that there will be tremendous number of EVs in the transportation system in the near future. 
\item The road network is one of the most well-established public infrastructure covering most regions of the World involving civil activities.  
\item Wireless power transfer technologies allow power to be transferred over an air gap. Many companies and research institutes have been actively improving the dynamic charging technologies for EVs \cite{qualcomm,stanford}. They not only facilitate EV charging on the move but also help lower the EV market price by allowing EV battery with smaller size. 
\item Vehicular ad-hoc network (VANET) is a mature technology facilitating vehicle-to-vehicle and vehicle-to-roadside, and vehicle-to-infrastructure communications. The vehicular network allows us to acquire various EV status information. 
\end{itemize}
All these facilitate the realization of vehicular energy network (VEN) \cite{VEN}. VEN is a vehicular network capable of transmitting energy effectively over a large geographical area by means of EVs. It is built upon the existing  road networks, where EVs traverse the network along certain routes completely based on the drivers' wills.  Wireless (dis)charging facilities and small energy storage are installed at certain road junctions. With VANET, we can acquire the travel plans of the participating EVs and a set of vehicular routes with known traffic flows can be determined. When a particular EV comes across a road junction, it is  wirelessly charged with (a small amount of) energy. When it reaches an appropriate road junction, we discharge the energy from the EV, such that the energy is then stored in the storage facility such as a battery for subsequently charging another EV. Hence we utilize EVs as carriers to convey energy from one place to another. With proper selection of EVs to carry energy, the energy transmission rates over the road connections are highly controllable. 

Renewables like solar and wind energies are usually generated in remote locations where the absence of electric transmission systems prevents the generated energy from being brought back to the main grid. Even if such transmission systems exist, we may sometimes  disconnect  them intentionally to avoid uncontrollable situations due to intermittency of renewables. 
As road networks are usually available, we can construct the corresponding overlay VENs. As the energy transmission rates of VEN are controllable, the transmission schemes can be made adaptable to the intermittency. Hence VEN is particularly suitable for promoting the use of renewables.
Even if the participating EVs carry very little amount of energy each time, VEN has been shown to be effective at conveying a substantial amount of energy across a large geographical region in a short period of time \cite{VEN}.


The characteristics of VEN can be summarized as follows. Energy is carried by EVs in the form of  ``energy packets'' and this packet switching-like design makes the energy transmission scheme of VEN highly controllable. We do not require the EVs to actively participate in the sense that  they neither need to follow any instructions  to deviate from their own paths nor slow down for (dis)charging purposes. VEN is very flexible; VEN can be easily built on top of any existing road network and the energy source and destination can be altered freely without physically modifying the infrastructure. VEN incurs very low capital cost as most of the required equipment is off-the-shelf.\footnote{Dynamic charging is primarily designed for the ease of EV charging. We just adopt this technology for the purpose of conveying energy on VEN.}

The rest of the paper is organized as follows. In Section \ref{sec:related}, we review related work on EVs and their developments in the smart grid context. Section \ref{sec:model} describes the VEN system model and analytically quantifies the system variables and their relationships. In Section \ref{sec:wholeset}, we propose a method to construct energy paths, followed by some analytical results. We also discuss the utilization of the energy paths and derive two general solutions for solving VEN problems of various design objectives. Section \ref{sec:heuristic} introduces a heuristic for the power loss minimization problem. We evaluate the performance of the three proposed solution methods
in Section \ref{sec:performance} and conclude in Section \ref{sec:conclusion}.


\section{Related Work} \label{sec:related}

EVs take a very important role in energy management in the smart grid. When compared to the capacity of the power grid, the capacity of an EV is very small. However, an aggregation of many EVs can become a huge load or power source. An energy market can be set up to trade energy between aggregations of EVs with the main grid in a vehicle-to-grid system \cite{V2GMarket}. EVs can also be used to provide regulation services to the power system in a distributed fashion \cite{regulation}. In practice, charging stations are currently the main source of energy supply to EVs and their locations can affect the mobility pattern of vehicles \cite{EVCPP}. With VEN, EVs are used to transport energy across an area,  complementing the power network. EVs can also obtain energy to support mobility from VEN. We can see that VEN brings a new dimension of functionality in the smart grid.

VEN is specially designed for conveying energy while VANET aims to disseminate information.  Yet they both utilize the vehicular network to provide additional services over geographical areas other than transportation of  passengers or goods. They share many similarities on the underlying routing principle making use of the opportunistic contacts of vehicles for energy or data exchanges. 
\cite{geopps} proposed an opportunistic routing protocol for VANET by exploiting vehicular mobility patterns and geographical information provided in navigation systems.
\cite{Greedy} focused on position-based routing with topological knowledge for VANET in a city environment.
\cite{Move} proposed an opportunistic forwarding scheme, which utilizes velocity information to make forwarding decisions.
However, routing algorithms developed for VANET may not be applicable to VEN as data and energy are different in nature. Data packets are different from one another, i.e., we are dealing with a multi-commodity routing problem, although they can be replicated to increase the chance of transmission success. However, ``energy packets'' are indistinguishable, i.e., we have a a single commodity routing problem, and we cannot replicate energy.

Mobile electrical grid (or called EV energy network) proposed in \cite{EVnetICC} has a similar but different design as VEN. It does make use of EVs for energy transmission and distribution but it requires the involved EVs to actively participate in the energy transmission process by stopping at particular locations for charging and discharging. However, with dynamic   (dis)charging technologies, VEN can function transparently to the EV drivers. In \cite{VEN}, we provided an extensive analytical framework for further performance study of VEN.

\cite{EVnetMASS} discussed routing in the mobile electrical grid in the presence of traffic congestion by assuming every route capable of transmitting unlimited amount of energy.
It constructed energy routes heuristically in terms of shortest paths. 
\cite{EVnetISGT2014} relaxed the above unlimited energy assumption and considered a simple flow model for multiple route construction.
However, the shortest-path strategy may not be appropriate when the focus is not on energy loss. Even so, we will show that this strategy may not give the optimal results. In this paper, we provide the fundamentals of VEN routing which can be applied to problems of different system objectives.


\section{System Model} \label{sec:model}

We follow \cite{VEN} to define VEN. VEN is built upon a vehicular network, where EVs traverse different locations through some vehicular routes. We first define the underlay vehicular network and then the overlay energy network.

\subsection{Vehicular Network} 
Suppose that there is a fleet of EVs, which participates in VEN, traversing the vehicular network. We model the network with a directed graph $G(\mathcal{N},\mathcal{A})$, where $\mathcal{N}$ is the set of road junctions and $\mathcal{A}$ is the set of road segments or arcs connecting the road junctions. For each arc $a=(tail(a),head(a))\in\mathcal{A}$, EVs go along $a$ from $tail(a)\in\mathcal{N}$ to $head(a)\in\mathcal{N}$.
A vehicular route is a sequence of physically connected arcs and the $i$-th route is denoted by $r_i=\langle a^i_1,\ldots,a^i_{|r_i|} \rangle$, which is composed of $|r_i|$ arcs. $r_i$ is known to the system if there exists some traffic of EVs starting at $tail(a^i_1)$ and ending at $head(a^i_{|r_i|})$.
Without loss of generality, we assume all vehicular routes are loop-free. It is generally true that an EV will not pass through a repeated road junction along a single vehicular route in normal situations. Even if it does, we can consider a looped route as two different routes.
Consider the example shown in Fig. \ref{fig:loop} in which a vehicular route $r$  is composed of four arcs, i.e., $r=\langle a_1, a_2, a_3, a_4\rangle$. If we remove the loop formed by $a_2$ and $a_3$, the vehicular flow along $r$ will be broken due to a skip of time spent on the loop. However, without the loop, we can consider $r$ as two separate routes, as $r_1=\langle a_1 \rangle$ and $r_2=\langle a_4 \rangle$. In this way, we can consider any looped vehicular route as two independent unlooped routes.

\begin{figure}[!t]
\centering
\includegraphics[width=3.2in]{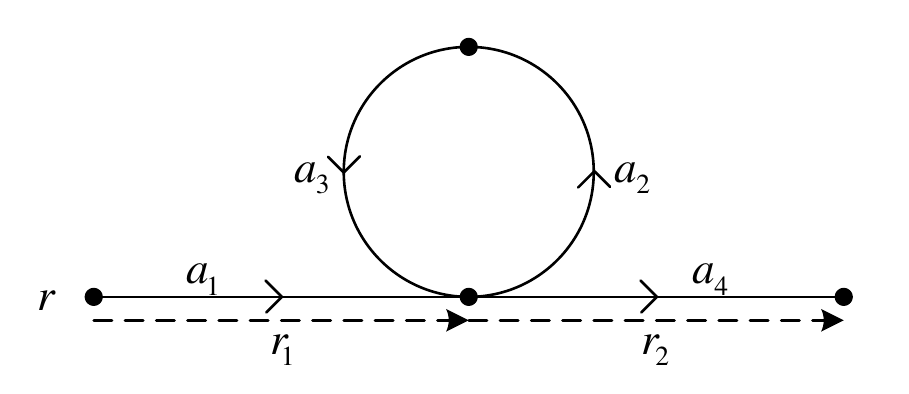}
\caption{Transformation of a vehicular route with a loop.}
\label{fig:loop}
\end{figure}

The $n$-th arc of $r_i$ is denoted as $r_i(n)$, i.e., $r_i(n)=a^i_n$. With $n<m$, we also define the sub-route of $r_i$ connecting $r_i(n)$ and $r_i(m)$ as $r_i(n,m)$, i.e., $r_i(n,m) = \langle a^i_n,\ldots,a^i_m \rangle$.
With the support of VANET and other communication technologies, most vehicles are connected in the future \cite{connected}. It is possible to track the current locations of the participating EVs through certain sensing technologies, e.g., through GPS. However, the availability of their subsequent moves also depends on the nature of the EVs and the degree of information disclosure. For example, suppose that there is an EV which intends to go along the route $\langle a_1, a_2,a_3,a_4 \rangle$. If the EV fully discloses its travel plan (e.g., it is a public transport), the route $\langle a_1, a_2,a_3,a_4 \rangle$ will be recorded in the system. On the other hand, the EV may not disclose any information at all and its location will be tracked when it passes through $tail(a_1)$. Once it is on $a_1$, it must go along $a_1$ and pass through $head(a_1)$ for sure. Thus we can still have a route $\langle a_1 \rangle$ recorded in the system. Depending on the willingness of the driver, it may also result in $\langle a_1, a_2 \rangle$ or $\langle a_1, a_2,a_3 \rangle$. In this way, we make up a set of vehicular route $\mathcal{R}$ and the traffic flow $f_i$ on $r_i\in \mathcal{R}$ can be estimated by counting the number of EVs traversing $r_i$ for a certain period of time.

\subsection{Energy Network} 

Assume that each road junction in $G(\mathcal{N},\mathcal{A})$ is equipped with wireless energy transfer equipment and a small energy storage. In this way, when an EV passes through such a road junction, it may discharge some of its energy with the wireless discharging device and the energy will be stored. When another EV passes by the road junction, the previously stored energy can be transferred wirelessly to it. Due to the advancement  in dynamic (dis)charging technology, the charging and discharging processes will be transparent to the EV driver; the EVs do not need to stop or even slow down in order to complete the charging or discharging process. Moreover, as shown in \cite{VEN}, the amount of energy needed to be transferred in each dynamic (dis)charging is very small, and dynamic (dis)charging can be considered instantaneous.

We define a set of energy sources $\mathcal{N}_s\subset\mathcal{N}$ and a set of energy destinations $\mathcal{N}_d\subset\mathcal{N}$. Each source $s\in \mathcal{N}_s$ has a source of energy connected, which can be a renewable energy source (e.g., solar parks and wind farms), a big energy storage, or even a traditional power generator. An energy destination $t \in \mathcal{N}_t$ has a load attached. 

\subsubsection{Energy Path}
$s$ is connected to $t$ through a set of energy paths $\mathcal{P}(s,t)$. Each energy path $p_j(s,t)\in \mathcal{P}(s,t)$ is composed of  segments of vehicular routes, i.e., $p_j(s,t) = \langle r_1^j(n_1,m_1),\ldots, r_i^j(n_i,m_i),\ldots,r_{|p_j|}^j(n_{|p_j|},m_{|p_j|}) \rangle$, where $r_i^j(n_i,m_i)$ is the $i$-th segment of $p_j(s,t)$ and it is also the sub-route of $r_i^j$ starting at its $n_i$-th arc and ending at its $m_i$-th arc. $|p_j|$ is  the number of vehicular sub-routes adopted to construct $p_j(s,t)$. $p_j(s,t)$ is deemed valid if the following conditions are satisfied:
\begin{enumerate}
	\item[(i)] $tail(r_1^j(n_1))=s$;
	\item[(ii)] $head(r_i^j(m_i)) = tail(r_{i+1}^j(n_{i+1}))$, for $i=1,\ldots,$ \mbox{$|p_j|-1$}; and
	\item[(iii)] $head(r_{|p_j|}^j(m_{|p_j|}))=t$.
\end{enumerate}
From the energy perspective, when some energy is transmitted along $p_j(s,t)$, some EVs are first wirelessly charged at $tail(r_1(n_1))$ and go along $r_1(n_1,m_1)$ with the energy. At $head(r_1(m_1))$ (i.e., $tail(r_2(n_2))$, the EVs are discharged and the energy is stored in the storage. Next the energy is drawn from the storage and used to charge other EVs along $r_2(n_2,m_2)$. This process continues along $p_j(s,t)$ until the energy reaches $t$, i.e., $head(r_{|p_j|}(m_{|p_j|}))$. Strictly speaking, the sequence of charging and discharging events should follow the time order; at a road junction, energy needs to be first discharged from an EV before it can be used to charge another EV. However, energy does not have an identity and one unit of energy from one source is identical to one unit from another source. In this sense, at a road junction along $p_j(s,t)$, charging can take place before discharging. In other words, energy which has possibly come from another source is ``borrowed'' from the storage to perform a charge and the energy deficit in the storage can then be compensated from a subsequent discharge. This is possible provided that the amount of transferred energy and the involved time window are small. Note that the charging and discharging events intertwine on EVs at relatively high frequency. As long as the energy storage is sufficient, the reordering of charging and discharging would not disturb the energy ``flow'' along the energy path. The storage size would affect the system performance to a certain extent and we will leave the study of the impact of storage size for future investigation.

Note that an energy path should be loop-free. Although the composite vehicular routes are already loop-free, it is still possible to form an energy path with loops. However, the purpose of an energy path is to transfer energy. When the (dis)charging facilities at the road junctions are equipped with energy storage, loops in energy paths are meaningless and can only complicate  the organization and management of the system. Fig. \ref{fig:loop_storage} explains this with an energy path $p(s,t)=\langle r_1, r_2, r_3\rangle$. EVs on $r_1$ bring some energy from $s$ to $n$ and then the energy goes from $n$ along with the EVs on $r_2$ back to $n$. After that, EVs on $r_3$ bring the energy from $n$ to $t$. However, $r_2$ is redundant and it does not help transmit energy toward the destination. When being discharged at $n$ at the end of $r_1$, the energy can be charged on EVs along $r_3$ instead of $r_2$. Hence, this energy path can be re-constructed as $p(s,t)=\langle r_1, r_3\rangle$.

\begin{figure}[!t]
\centering
\includegraphics[width=3.2in]{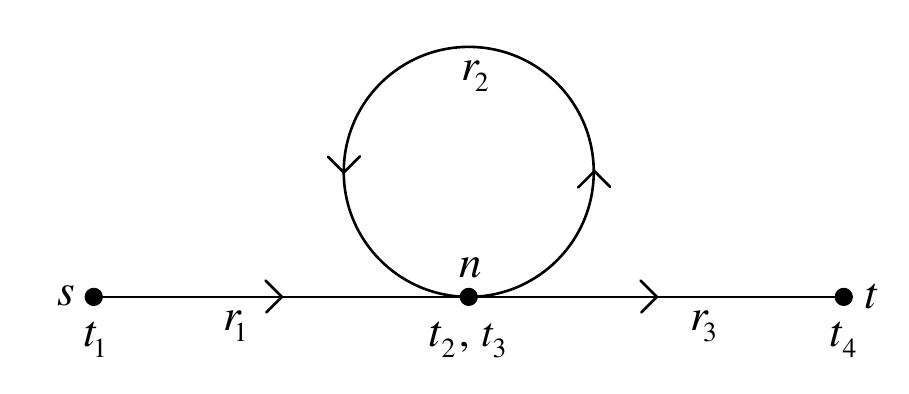}
\caption{Energy management for an energy path pretended to have a loop.}
\label{fig:loop_storage}
\end{figure}

\subsubsection{Delay}
When energy is transmitted along $p_j(s,t)$, it incurs some delay. Similar to the data network, there are ``propagation delay'', ``processing delay'', and ``transmission delay''. Propagation delay is the amount of time for the first dash of energy to travel from $s$ to $t$. Consider that an EV takes a delay of $d(a_k)$ to traverse the road connection $a_k\in \mathcal{A}$ on the average. For an EV to traverse Sub-route $r_i^j(n_i,m_i)$, it experiences a delay of $d(r_i^j(n_i,m_i))=\sum_{a_k\in r_i^j(n_i,m_i)}{d(a_k)}$ on the road connections. As an energy path is composed of a number of vehicular sub-routes, each of which is also composed of a number of road connections, the propagation delay of $p_j(s,t)$, denoted by $d(p_j)$, can be computed as 
\begin{align}
	d(p_j) = \sum_{i=1}^{|p_j|}{d(r_i^j(n_i,m_i))} = \sum_{i=1}^{|p_j|} \sum_{a_k\in r_i^j(n_i,m_i)}{d(a_k)}.
\end{align}

Processing delay refers to the time a (dis)charging facility takes to transfer the energy from one EV to another. Recall that dynamic (dis)charging happens instantaneously. Hence, we can assume negligible processing delay.

Transmission delay is the amount of time required to push all the energy from $s$ onto an energy path. It is related to the energy transfer rate of the energy path and we will discuss it next.

\subsubsection{Energy Transfer}
Let $f_i^j$ be the EV flow rate of the $i$-th segment of $p_j(s,t)$ and $w$ be the amount of energy carried by an EV in each charging-discharging cycle. EVs carry ``packets'' of energy and thus we call $w$ the ``packet size''. Then $w f_i^j$ is the energy transfer rate of $r_{i}^j(n_i,m_i)$. The overall energy transmission rate of $p_j(s,t)$, denoted by $g_j$, should be smaller than or equal to the minimum of the energy transmission rates of all its composite segments. Thus we have
\begin{align}
	g_j\leq wf_i^j, \quad i=1,\ldots,|p_j|. \label{transferRate}
\end{align}
\begin{figure}[!t]
\includegraphics[width=3.5in]{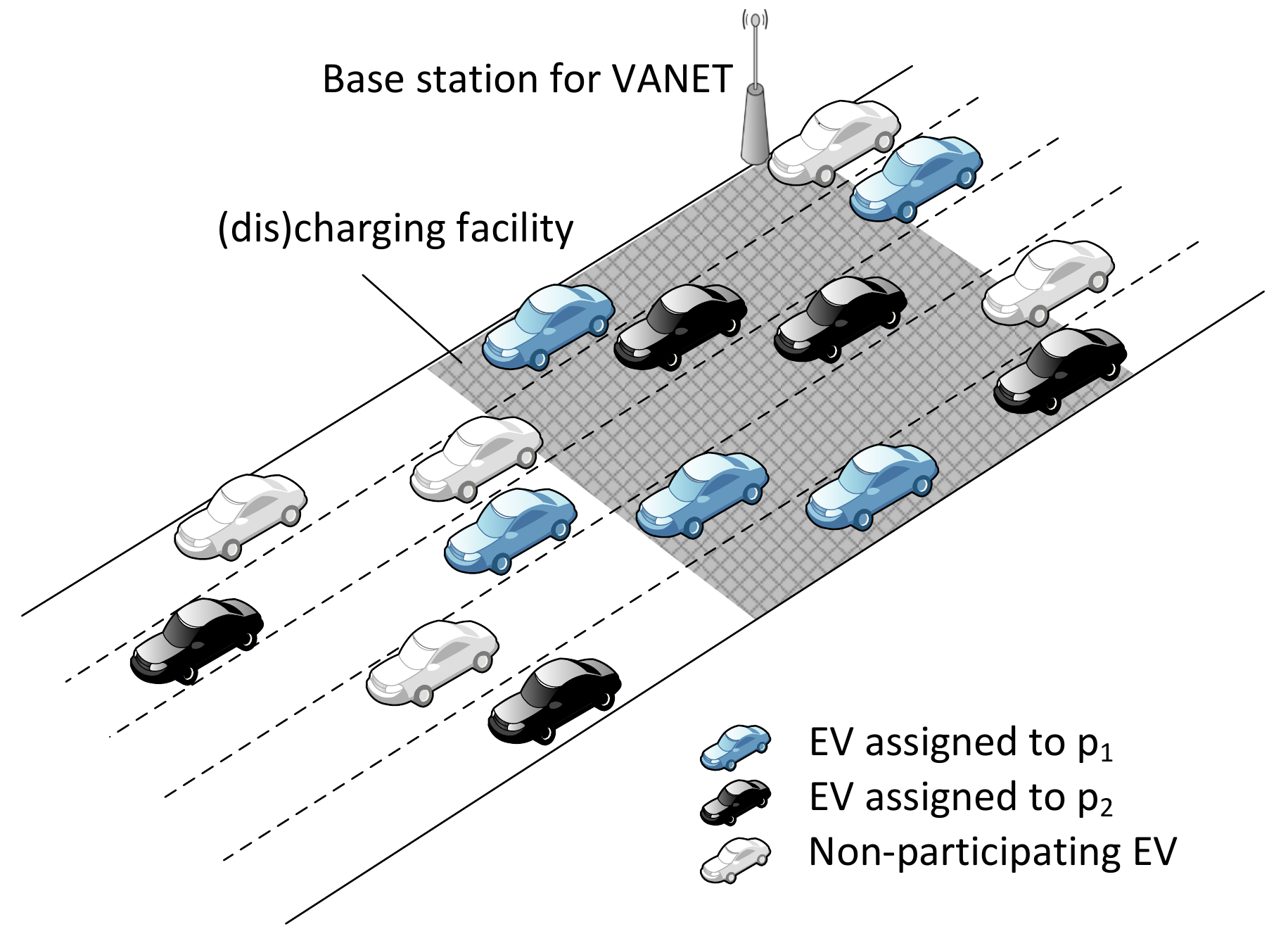}
\caption{Transmission rate assignment.}
\label{fig:rateAssign}
\end{figure}
Fig. \ref{fig:rateAssign} illustrates how we assign transmission rates to different energy paths in a real-world setting. It shows a road segment with a (dis)charging facility installed and it constitutes two energy paths $p_1$ and $p_2$. Suppose that this road segment belongs to the $i_1$-th and $i_2$-th sub-routes of $p_1$ and $p_2$, respectively. With VANET, we can differentiate the participating EVs from the non-participating ones and thus we can determine the vehicular flows of the participating EVs on this road segment. So $f_{i_1}^1=f_{i_2}^2$ is known. By suitably assigning the participating EVs to $p_1$ and $p_2$, we can configure $g_1$ and $g_2$.

When a charging or discharging event takes place, a certain fraction of energy will be lost. Let $z_c$ and $z_d$ be the charging and discharging efficiencies, respectively, where $0\leq z_c, z_d \leq 1$. During charging, only a fraction of $z_c$ can be successfully transferred to an EV from a charging facility and a fraction of $(1-z_c)$ is lost. A similar situation happens in discharging. Along $p_j(s,t)$, energy is charged $|p_j|$ times  and discharged $|p_j|$ times. Let $z=z_cz_d$. If we require $x_j$ units of energy to reach $t$ along $p_j(s,t)$, $\frac{x_j}{z^{|p_j|}}$ units of energy need to be injected from $s$, which has $(\frac{1}{z^{|p_j|}}-1)x_j$ units of energy loss. Hence the incurred transmission delay is $\frac{x_j}{z^{|p_j|}g_j}$ and the total time required is $d(p_j) + \frac{x_j}{z^{|p_j|}g_j}$. Let $T$ be the time window allowed for the energy transfer. The transferable amount of energy along $p_j(s,t)$ is governed by
\begin{align} 
x_j\leq (T-d(p_j)) z^{|p_j|}g_j. \label{transferamount}
\end{align}
 By considering all possible energy paths in $\mathcal{P}(s,t)$, the total amount of energy transferable from the source $s$ to the destination $t$, denoted by $x(s,t)$, in a time period $T$ is given by
\begin{align}
x(s,t) = \sum_{j|p_j\in \mathcal{P}(s,t)}{x_j} = \sum_{j|p_j\in \mathcal{P}(s,t)}{(T-d(p_j)) z^{|p_j|}g_j} \label{totalenergy}
\end{align}
and the corresponding energy loss is
\begin{align}
L(s,t) = \sum_{j|p_j\in \mathcal{P}(s,t)}{(\frac{1}{z^{|p_j|}}-1)x_j}. \label{energyLoss}
\end{align}


\section{Energy Path Construction} \label{sec:wholeset}

In order to route energy from a source to a destination, we need to establish an energy path connecting them  in VEN. Each energy path may be composed of a different number of vehicular segments experiencing different number of charging-discharging cycles.  So the amount of energy loss induced from each energy path varies. Moreover, each path may be constructed from a diverse subset of road connections with varying delays and thus it may admit different propagation delays. Therefore the choice of energy path for energy transfer affects the system performance.


\begin{figure}[!t]
	\begin{center}
		\subfigure[Example 1.]{\label{fig:example1}\includegraphics[width=3.2in]{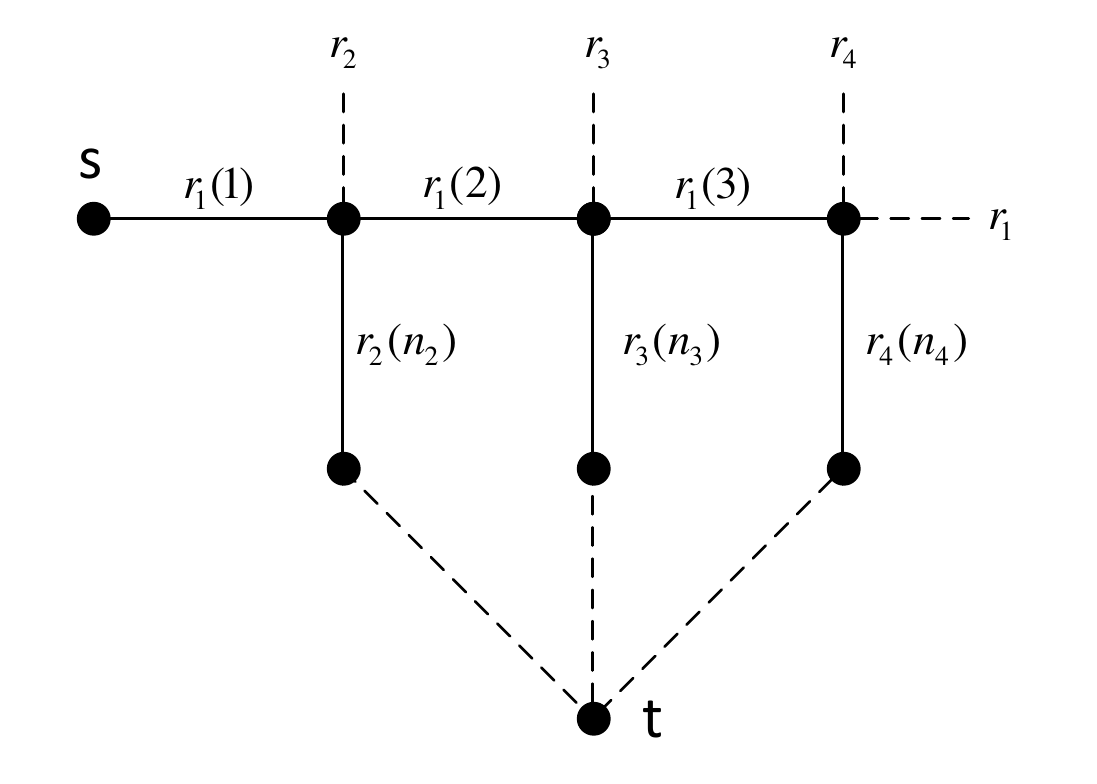}}
    \subfigure[Example 2.]{\label{fig:example2}\includegraphics[width=3.50in]{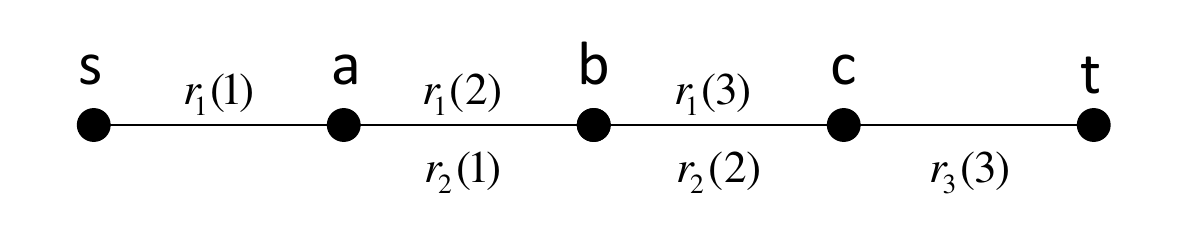}} 
	\end{center}
	\caption{Energy path formation.}
  \label{fig:routeformation}
\end{figure}

For each $(s,t)$ pair, many possible energy paths can be constructed. Consider the examples given in Fig. \ref{fig:routeformation}. In Fig. \ref{fig:example1}, there are four vehicular routes, $r_1$ to $r_4$. The source $s$ is attached to $r_1$ and the destination $t$ can be accessed through $r_2$, $r_3$, or $r_4$. In this case, we can utilize different segments of $r_1$ to construct energy paths. In other words, we have $\langle r_1(1),r_2(n_2),\ldots \rangle$, $\langle r_1(1,2),r_3(n_3),\ldots \rangle$, and $\langle r_1(1,3),r_4(n_4),\ldots \rangle$. In some cases with overlapping edges, multiple paths may result. Consider the example in Fig. \ref{fig:example2} with two vehicular routes, $r_1=\langle r_1(1),r_1(2),r_1(3) \rangle$ and $r_2=\langle r_2(1),r_2(2),r_2(3) \rangle$. We can connect segments of $r_1$ and $r_2$ at different connecting points (a, b, or c) to construct energy paths. Three paths can be constructed, i.e., $\langle r_1(1),r_2(1,3)\rangle$, $\langle r_1(1,2),r_2(2,3)\rangle$, and $\langle r_1(1,3),r_2(3)\rangle$. These examples illustrate the variety of  energy path construction.

Different scenarios have different performance requirements; in a given period $T$, we need to transmit $x(s,t)$ units of energy from the source $s$ to the destination $t$ subject to maximum energy loss of $L(s,t)$ units. All the variables $T$, $s$, $t$, $x(s,t)$, and $L(s,t)$ vary in different scenarios. In some cases, we may need to employ multiple energy paths in order to satisfy the performance requirements. For more information, the interested reader may refer to \cite{VEN}. For each $(s,t)$ pair,  determining the whole set of energy paths $\mathcal{P}(s,t)$ will ease planning of the energy transmission scheme. 
Consider a vector $\nu = [\nu_1, \ldots, \nu_{|\mathcal{P}(s,t)|}]$, such that $\sum_{j=1}^{|\mathcal{P}(s,t)|}{\nu_j} =1$ and $0\leq \nu_j \leq 1,$ for $j=1,\ldots,|\mathcal{P}(s,t)|$. Given $x(s,t)$, if we have $\mathcal{P}(s,t)$, the problem will be reduced to determining $\nu$ such that $x(s,t) = \sum_{j=1}^{||\mathcal{P}(s,t)|}\nu_j x_j$ with other performance requirements satisfied.
In other words, we are dealing with how the energy from the source is distributed on the available energy paths. 
In the rest of the paper, we will focus on routing between one source and one destination. Our results can be extended to the routing for multiple sources and/or destinations  and this will be left for future work. 
In the following, we discuss how to determine $\mathcal{P}(s,t)$.

\subsection{Construction of $\mathcal{P}(s,t)$}  \label{sec:construction}

\begin{algorithm}
\caption{Energy path construction} \label{pathalgo}
\begin{algorithmic}[1]
\STATE Construct $\tilde{G}(\tilde{\mathcal{N}},\tilde{\mathcal{A}})$ and $\mathcal{I}_{\tilde{\mathcal{A}}}$ from $G(\mathcal{N},\mathcal{A})$ with $\mathcal{R}$.
\STATE Construct the destination-not-accessiable node set $\overline{\mathcal{N}}$ from $G(\mathcal{N},\mathcal{A})$.
\STATE Construct $\tilde{\mathcal{A}}'$ from $\tilde{\mathcal{A}}$ with $\overline{\mathcal{N}}$.
\STATE Construct $\mathcal{K}=\{\langle s \rangle\}$ and $\hat{\mathcal{K}}=\varnothing$.
\REPEAT
	\FORALL {$i\in \tilde{\mathcal{N}}\setminus \{t\}$}
		\STATE Construct $\tilde{\mathcal{A}}'_i$ from $\tilde{\mathcal{A}}'$.
		\FORALL {$k\in\mathcal{K}$ with the end of sequence equal to $i$}
			\STATE Construct $\mathcal{K}'_{ki}$ with $\tilde{\mathcal{A}}'_i$.
			\STATE Update $\mathcal{K} \gets \mathcal{K}\cup \mathcal{K}'_{ki} \setminus k$.
		\ENDFOR
	\ENDFOR
	\FORALL {$k\in\mathcal{K}$ with the end of sequence equal to $t$}
		\STATE Update $\hat{\mathcal{K}} \gets \hat{\mathcal{K}}\cup k$.
	\ENDFOR
	\STATE Update $\mathcal{K} \gets \mathcal{K}\setminus \hat{\mathcal{K}}$.
\UNTIL {$\mathcal{K}=\varnothing$}
\STATE Construct $\mathcal{P}(s,t)$ from $\hat{\mathcal{K}}$ and $\mathcal{I}_{\tilde{\mathcal{A}}}$.
\end{algorithmic}
\end{algorithm}

We develop Algorithm \ref{pathalgo} to construct $\mathcal{P}(s,t)$. We start to construct a directed graph $\tilde{G}(\tilde{\mathcal{N}},\tilde{\mathcal{A}})$ and a collection of index sets $\mathcal{I}_{\tilde{\mathcal{A}}}=\{\mathcal{I}_{a}|a\in \tilde{\mathcal{A}}\}$ from $G(\mathcal{N},\mathcal{A})$ with $\mathcal{R}$ (Step 1), where $\tilde{\mathcal{N}}=\mathcal{N}$. For any $i,j\in \tilde{\mathcal{N}}$, $(i,j)$ is in $\tilde{\mathcal{A}}$ if there exists a vehicular route $r\in\mathcal{R}$ connecting $i$ and $j$. In other words, $i$ is connected to $j$ in $\tilde{G}(\tilde{\mathcal{N}},\tilde{\mathcal{A}})$ when there are EVs going from $i$ to $j$ along a particular vehicular route. $(i,j)$ represents accessibility of road junctions along at least one  vehicular route. 
Each $\mathcal{I}_a$ is an index set containing the indices of $r_l$'s, each of which contains $a$, i.e., $l\in \mathcal{I}_a \implies a\in r_l$. 
Then we determine the set of nodes $\overline{\mathcal{N}}$, each  of which cannot reach the destination $t$ on $G(\mathcal{N},\mathcal{A})$ (Step 2). This can be done by applying any shortest path algorithm, e.g. Dijkstra's algorithm \cite{dijkstra},  to each node in $G(\mathcal{N},\mathcal{A})$. A node $i$ is deemed not accessible to $t$ if there does not exist a path connecting $i$ to $t$ on  $G(\mathcal{N},\mathcal{A})$. We can ignore those nodes in $\overline{\mathcal{N}}$ in the subsequent manipulations because no paths connecting $s$ and $t$ can involve nodes in $\overline{\mathcal{N}}$.
Next we create an edge set $\tilde{\mathcal{A}}'$ by eliminating those edges from $\tilde{\mathcal{A}}$ with the starting nodes in $\overline{\mathcal{N}}$, i.e., $\tilde{\mathcal{A}}'=\{(i,j)\in \tilde{\mathcal{A}}|j\not\in \overline{\mathcal{N}}\}$. 

After that, we manipulate two sets, $\mathcal{K}$ and $\hat{\mathcal{K}}$ repeatedly (Steps 4--17), both of which contain sequences of nodes on $\tilde{G}(\tilde{\mathcal{N}},\tilde{\mathcal{A}})$. $\mathcal{K}$ maintains some developing sequences (partial energy paths) and those completed energy paths will be moved to  $\hat{\mathcal{K}}$ from $\mathcal{K}$. We initialize $\mathcal{K}$ with a single-node sequence $\langle s \rangle$ and $\hat{\mathcal{K}}$ as a null set (Step 4). In each iteration, we manipulate each node $i$ in $\tilde{\mathcal{N}}$ except the destination $t$.\footnote{Since loops are not allowed in any energy path, there does not exist an energy path with a composite sub-route originated from $t$.} We construct $\tilde{\mathcal{A}}'_i =\{(l,j)\in \tilde{\mathcal{A}}'|l=i\}$ by selecting those arcs with starting nodes equal to $i$ from $\tilde{\mathcal{A}}'$ (Step 7). We then check for each sequence $k$ in $\mathcal{K}$ ending with node $i$, i.e., $k=\langle k_1,\ldots, k_{|k|} | k_1=s, k_{|k|}=i \rangle$, where $|k|$ is the length of $k$ (Step 8). We create a set $\mathcal{K}'_{ki}$ of $|\tilde{\mathcal{A}}'_i|$ sequences by appending each $a=(i,j)\in \tilde{\mathcal{A}}'_i$ to $k$ as $\langle s,\ldots,i,j \rangle$ (Step 9). Then we update $\mathcal{K}$ by replacing $k$ with $\mathcal{K}'_{ki}$ (Step 10). After updating $\mathcal{K}$ with each $i$ in $\tilde{\mathcal{N}}$, we put those sequences in  $\mathcal{K}$ ending with the destination $t$ into $\hat{\mathcal{K}}$ (Steps 13--15). Then we remove those sequences already in $\hat{\mathcal{K}}$ from $\mathcal{K}$ (Step 16). The iterations terminate when there is no sequence in $\mathcal{K}$ (Step 17) and we output $\hat{\mathcal{K}}$.

In Step 18, we retrieve $\mathcal{P}(s,t)$ from $\hat{\mathcal{K}}$ and $\mathcal{I}_{\tilde{\mathcal{A}}}$. Each $k$ in $\hat{\mathcal{K}}$ represents at least one energy path (proof will be given in the next subsection) and potentially several. Note that each consecutive pair of nodes  $(i,j) $ in $k$, i.e., an arc in $\tilde{G}(\tilde{\mathcal{N}},\tilde{\mathcal{A}})$, can represent multiple vehicular sub-routes, which have been indexed with $\mathcal{I}_{\tilde{\mathcal{A}}}$. Those energy paths corresponding to $k=\langle k_1,\ldots, k_{|k|} \rangle$ can be determined by checking all combinations of the involved index sets $\mathcal{I}_{(k_1,k_2)}, \mathcal{I}_{(k_2,k_3)}, \ldots,\mathcal{I}_{(k_{|k|-1},k_{|k|})}$. For example, consider $k=\langle k_1,k_2,k_3\rangle$, $\mathcal{I}_{(k_1,k_2)}=\{i_1,i_2\}$, and $\mathcal{I}_{(k_2,k_3)}=\{i_3,i_4\}$. Then we have four index combinations, $[i_1,i_3]$, $[i_1,i_4]$,  $[i_2,i_3]$, and $[i_2,i_4]$. Each of these combinations gives one energy path, e.g., $[i_1,i_3]\implies \langle r_{i_1}(n_{i_1},m_{i_1}),r_{i_3}(n_{i_3},m_{i_3}) \rangle$ with $tail(r_{i_1}(n_{i_1}))=k_1$, $head(r_{i_1}(m_{i_1}))= tail(r_{i_3}(n_{i_3})) =k_2$, and $head(r_{i_3}(m_{i_3})) =k_3$. 
However, an index combination with repeated indices will not form an energy path; if an energy path is constituted from multiple sub-routes of the same vehicular route, we just need to route the energy from the very front of these sub-routes to the very end directly without incurring unnecessary energy loss from extra charging and discharging processes.
Each $k$ gives at most $|\mathcal{I}_{(k_1,k_2)}| \times |\mathcal{I}_{(k_2,k_3)}| \times \ldots \times |\mathcal{I}_{(k_{|k|-1},k_{|k|})}|$ energy paths. 
In this way, we can determine the whole set $\mathcal{P}(s,t)$ by examining all $k$ in $\hat{\mathcal{K}}$.

\subsection{Analytical Results} \label{sec:analysis}
Here we give the major analytical results related to Algorithm \ref{pathalgo}:
\begin{lemma} \label{lm:stop}
Algorithm \ref{pathalgo} must terminate.
\end{lemma}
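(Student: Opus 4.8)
The plan is to isolate the main loop (Steps 5--17) as the only place where nontermination could arise, since Steps 1--4 and Step 18 each do a fixed, finite amount of work on the finite objects $G$, $\mathcal{R}$, $\tilde G$, and $\hat{\mathcal{K}}$, and each individual pass of the loop touches only the finite sets $\tilde{\mathcal{N}}$, $\tilde{\mathcal{A}}'$, and $\mathcal{K}$. I would then give a progress measure for the loop that strictly increases on every pass while staying bounded above, which forces the guard $\mathcal{K}=\varnothing$ to hold after finitely many passes.

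First I would establish the invariant that every element of $\mathcal{K}$ (and hence of $\hat{\mathcal{K}}$) is a loop-free directed path of $\tilde G(\tilde{\mathcal{N}},\tilde{\mathcal{A}})$ starting at $s$. This holds for the initial $\mathcal{K}=\{\langle s\rangle\}$, and it is preserved by Steps 9 and 10, where a sequence $k$ is replaced by the sequences formed by appending a single out-arc $(i,j)\in\tilde{\mathcal{A}}'_i$; because energy paths are required to be loop-free, such an extension keeps the sequence simple. Since $\tilde{\mathcal{N}}=\mathcal{N}$ is finite and a simple path visits each node at most once, every sequence that ever enters $\mathcal{K}$ has length at most $|\mathcal{N}|$. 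This upper bound is the engine of the argument.

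Next I would show that a single pass strictly raises the minimum sequence length in $\mathcal{K}$. Let $\ell_{\min}$ be that minimum at the start of a pass. Any sequence of length $\ell_{\min}$ ends at some node: if that node is $t$ it is carried out of $\mathcal{K}$ into $\hat{\mathcal{K}}$ by Steps 13--16; if it is a node $i\neq t$, then when the inner \textbf{for all} $i$ loop reaches $i$ this sequence is still present (sequences are removed only when their own end node is processed), and Step 10 deletes it and replaces it either by strictly longer extensions or, at a dead end with $\tilde{\mathcal{A}}'_i=\varnothing$, by nothing. During the pass the only sequences created are extensions, of length at least $\ell_{\min}+1$, so no length-$\ell_{\min}$ sequence can survive the pass or be regenerated. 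Hence at the end of the pass the minimum length is at least $\ell_{\min}+1$. Together with the bound $|\mathcal{N}|$, this yields at most $|\mathcal{N}|$ passes before $\mathcal{K}$ is empty and the \textbf{until} guard fires.

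The step I expect to be the main obstacle is this strict-increase claim, precisely because the update $\mathcal{K}\gets\mathcal{K}\cup\mathcal{K}'_{ki}\setminus k$ occurs \emph{inside} the \textbf{for all} $i$ loop: an extension created while processing node $i$ can end at a node $j$ processed later in the same pass, so one pass may lengthen a path by more than a single node and the intermediate contents of $\mathcal{K}$ are awkward to track. The care needed is to reason only about the state of $\mathcal{K}$ at the start and end of a pass, and to verify both that every start-of-pass sequence of length $\ell_{\min}$ is actually reached and discarded by its governing node, and that appending never manufactures a fresh length-$\ell_{\min}$ sequence. I would also flag that the argument genuinely relies on loop-freeness to cap the path length: without that cap the minimum length would still rise each pass but without an upper bound, and the loop would not be guaranteed to halt.
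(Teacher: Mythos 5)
Your proof is correct and follows the same basic strategy as the paper's: the sequences in $\mathcal{K}$ only ever get longer, and since their length is capped, $\mathcal{K}$ must empty out. The difference is one of rigor rather than of route. The paper's proof simply asserts that ``the sequences in $\mathcal{K}$ keep elongating and eventually end with $t$,'' leaning on the restriction to $\tilde{\mathcal{A}}'$ (arcs whose endpoints can still reach $t$); it never states \emph{why} the elongation must stop, deferring the loop-freeness length bound to the later Lemma~\ref{lm:length}. You supply exactly the two pieces that make this airtight: the invariant that every member of $\mathcal{K}$ is a simple path from $s$ (hence of length at most $|\mathcal{N}|$), and the progress measure that the minimum sequence length strictly increases each pass, with the careful handling of the fact that the update $\mathcal{K}\gets\mathcal{K}\cup\mathcal{K}'_{ki}\setminus k$ happens inside the inner loop so a sequence can grow by several nodes in one pass. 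Your remark that the argument genuinely hinges on loop-free extension is well placed: as literally written, Step 9 does not check that the appended node $j$ is absent from $k$, and without that convention (which the paper uses implicitly here and explicitly in Lemma~\ref{lm:cardP}) neither proof would go through. You also correctly allow for sequences that die at a dead end with $\tilde{\mathcal{A}}'_i=\varnothing$, a case the paper's proof glosses over.
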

\begin{proof}
When we manipulate $\mathcal{K}$ in the iterations, we replace each sequence $k$ in $\mathcal{K}$ with a set of sequences $\mathcal{K}_{ki}'$, each of which has the first $|k|$ elements exactly equal to $k$ and the $(|k|+1)$th element is based on $\tilde{\mathcal{A}}'_i$ such that the $|k|$-th and $(|k|+1)$-th elements constitute an arc found in $\tilde{\mathcal{A}}'_i$. $\tilde{\mathcal{A}}'_i$ is a subset of $\tilde{\mathcal{A}}$, which only contains those arcs whose ending nodes must be able to form a path to $t$ (See Step 3). In other words, the sequences in $\mathcal{K}$ keep elongating and eventually end with $t$. As we will move those sequences ending with $t$ to $\hat{\mathcal{K}}$, $\mathcal{K}$ must become empty eventually.
\end{proof}

\begin{lemma}
For any energy source $s$ and destination $t$, each sequence in $\hat{\mathcal{K}}$ constitutes at least an energy path $p_j(s,t)$.
\end{lemma}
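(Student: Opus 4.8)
The plan is to show that any node sequence $k=\langle k_1,\ldots,k_{|k|}\rangle\in\hat{\mathcal{K}}$ can be turned into a sequence of vehicular sub-routes meeting the validity conditions (i)--(iii) for an energy path. First I would record the invariants maintained by Algorithm \ref{pathalgo}: every sequence reaching $\hat{\mathcal{K}}$ begins at $s$, since $\mathcal{K}$ is seeded with $\langle s\rangle$ in Step 4 and sequences only grow by appending at their end; and it ends at $t$, since sequences are moved to $\hat{\mathcal{K}}$ exactly when their last element is $t$ (Steps 13--15). Moreover, each consecutive pair $(k_i,k_{i+1})$ was appended in Step 9 using an arc of $\tilde{\mathcal{A}}'_i\subseteq\tilde{\mathcal{A}}'\subseteq\tilde{\mathcal{A}}$, so every consecutive pair is an arc of $\tilde{G}(\tilde{\mathcal{N}},\tilde{\mathcal{A}})$.

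Next I would use the definition of $\tilde{G}$ together with the index collection $\mathcal{I}_{\tilde{\mathcal{A}}}$. By construction an arc $(k_i,k_{i+1})$ lies in $\tilde{\mathcal{A}}$ only because some vehicular route connects $k_i$ to $k_{i+1}$, so the index set $\mathcal{I}_{(k_i,k_{i+1})}$ is non-empty. Picking any $l_i\in\mathcal{I}_{(k_i,k_{i+1})}$ produces a sub-route $r_{l_i}(n_i,m_i)$ with $tail(r_{l_i}(n_i))=k_i$ and $head(r_{l_i}(m_i))=k_{i+1}$. The concatenation $\langle r_{l_1}(n_1,m_1),\ldots,r_{l_{|k|-1}}(n_{|k|-1},m_{|k|-1})\rangle$ then satisfies (i) because $k_1=s$, (iii) because $k_{|k|}=t$, and (ii) because each interior junction $k_{i+1}$ is simultaneously the $head$ of one segment and the $tail$ of the next. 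This already exhibits an object meeting the formal definition, so at least one energy path exists.

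The step I expect to be the main obstacle is reconciling this with the remark in Step 18 that an index combination containing a repeated route index should not itself be retained as an energy path. I would resolve it by a simplification argument: if the same representative route $r_l$ is chosen for two segments, then since $r_l$ is loop-free its nodes occur in a single fixed order along it, so whenever the two segments are consecutive along $r_l$ they can be fused into one sub-route of $r_l$ via the direct-routing rule, shortening the sequence, eliminating the redundant charge/discharge, and collapsing any loop introduced in the node walk. Iterating until all segment indices are distinct preserves conditions (i)--(iii) and the $s$-to-$t$ endpoints, yielding a genuine loop-free energy path. The delicate sub-case is two uses of $r_l$ on disjoint, non-consecutive portions of the route, which cannot be fused; here I would argue the two hops remain legitimately separate (they draw on disjoint portions of $r_l$'s traffic), so (i)--(iii) still hold and the sequence is already a valid energy path. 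Either way every $k\in\hat{\mathcal{K}}$ yields at least one $p_j(s,t)$, which is what the lemma asserts.
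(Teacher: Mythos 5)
Your argument is correct and follows essentially the same route as the paper's own proof: establish that every sequence in $\hat{\mathcal{K}}$ starts at $s$, ends at $t$, and has each consecutive pair realized by an arc of $\tilde{G}(\tilde{\mathcal{N}},\tilde{\mathcal{A}})$, then concatenate one witnessing vehicular sub-route per arc to satisfy conditions (i)--(iii). Your extra third paragraph on repeated route indices is a reasonable refinement, but the paper does not treat that subtlety inside this lemma (it is relegated to the informal discussion of Step 18), so it is not needed to match the paper's level of rigor.
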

\begin{proof}
In the iterations, we only extend the existing sequences in $\mathcal{K}$. Since $\langle s \rangle$ is the only sequence when $\mathcal{K}$ is initially defined, all sequences in $\mathcal{K}$ resulted in the subsequent iterations must start with $s$. When extending a sequence $\langle k_1,\ldots,k_i \rangle$ in $\mathcal{K}$ with a node $k_{i+1}$, $(k_i,k_{i+1})$ is an arc in $\tilde{G}(\tilde{\mathcal{N}},\tilde{\mathcal{A}})$, which means that Node $k_{i+1}$ can be accessed from Node $k_i$ on a particular vehicular route. At the end, only those sequences ending at $t$ can be found in $\hat{\mathcal{K}}$.

Each sequence $k=\langle k_1,k_2,\ldots,k_{|k|} \rangle$ with $k_1=s$ and $k_{|k|}=t$ in $\hat{\mathcal{K}}$ can then turn into an energy path $p_j(s,t)$. Each $(k_i,k_{i+1})$ pair, $i=1,\ldots,|k|-1$, represents the vehicular sub-route connecting junctions $k_i$ and $k_{i+1}$, i.e., $r_i(n_i,m_i)$ with $tail(r_i(n_i))=k_i$ and $head(r_i(m_i))=k_{i+1}$. Hence $k$ is constructed from $(|k|-1)$ sub-routes and $p_j(s,t)$ can be formed by concatenating the $(|k|-1)$ sub-routes.
\end{proof}

\begin{theorem} \label{thm:path}
For any energy source $s$ and destination $t$, Algorithm \ref{pathalgo} can determine the whole set of energy path $\mathcal{P}(s,t)$. 
\end{theorem}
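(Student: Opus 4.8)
The plan is to prove the set equality between what Algorithm~\ref{pathalgo} outputs and $\mathcal{P}(s,t)$ by establishing the two inclusions separately, since Lemma~2 already supplies the bulk of the ``output $\subseteq \mathcal{P}(s,t)$'' direction. For soundness it remains only to check that the Step~18 expansion of each $k\in\hat{\mathcal{K}}$ over the index sets $\mathcal{I}_{\tilde{\mathcal{A}}}$ emits exclusively valid paths. I would verify that any index combination with pairwise-distinct entries satisfies conditions (i)--(iii): (i) holds because $k_1=s$, (iii) because $k_{|k|}=t$, and (ii) because consecutive arcs $(k_i,k_{i+1})$ and $(k_{i+1},k_{i+2})$ meet at the junction $k_{i+1}$ by the construction of $\tilde{G}(\tilde{\mathcal{N}},\tilde{\mathcal{A}})$. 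The distinct-index restriction imposed in Step~18 is precisely what forbids reusing a single vehicular route, so no redundant looped configuration is ever produced and every emitted path genuinely lies in $\mathcal{P}(s,t)$.

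For completeness, which is the crux, I would take an arbitrary valid loop-free $p=\langle r_1(n_1,m_1),\ldots,r_{|p|}(n_{|p|},m_{|p|})\rangle\in\mathcal{P}(s,t)$ and show the algorithm generates it. First I associate to $p$ the node sequence $k_p=\langle v_0,\ldots,v_{|p|}\rangle$ with $v_0=s$ and $v_i=head(r_i(m_i))$, so that $v_{|p|}=t$ by (iii) and the interior equalities $v_i=tail(r_{i+1}(n_{i+1}))$ hold by (ii). Each pair $(v_{i-1},v_i)$ is an arc of $\tilde{G}$ because the sub-route $r_i(n_i,m_i)$ witnesses accessibility from $v_{i-1}$ to $v_i$ along $r_i$, whence the index of $r_i$ lies in $\mathcal{I}_{(v_{i-1},v_i)}$. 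The key refinement is that each such arc in fact survives into $\tilde{\mathcal{A}}'$: every $v_i$ with $i<|p|$ reaches $t$ through the tail of $p$, and $t$ reaches itself, so no $v_i$ lies in $\overline{\mathcal{N}}$ and no arc $(v_{i-1},v_i)$ is pruned in Step~3.

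The heart of completeness is an induction on prefix length showing $k_p$ is actually assembled in $\mathcal{K}$ and ultimately deposited in $\hat{\mathcal{K}}$. The invariant is that for each $\ell<|p|$ the prefix $\langle v_0,\ldots,v_\ell\rangle$ appears in $\mathcal{K}$ at some stage; the base case is the initialization $\langle s\rangle\in\mathcal{K}$. For the inductive step I rely on the facts that a sequence ending at a non-destination node is never moved to $\hat{\mathcal{K}}$ and is removed from $\mathcal{K}$ only by being extended, and that by loop-freeness $v_\ell\neq t$ for $\ell<|p|$. Since Lemma~1 guarantees $\mathcal{K}$ eventually empties, the prefix $\langle v_0,\ldots,v_\ell\rangle$ cannot persist forever and must be processed in Step~9; because $(v_\ell,v_{\ell+1})\in\tilde{\mathcal{A}}'_{v_\ell}$, the resulting set $\mathcal{K}'_{k v_\ell}$ contains $\langle v_0,\ldots,v_{\ell+1}\rangle$. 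At $\ell=|p|-1$ the extension reaches $t$, so $k_p$ is moved into $\hat{\mathcal{K}}$, and in Step~18 the combination selecting the index of each $r_i$ from $\mathcal{I}_{(v_{i-1},v_i)}$ reproduces $p$ exactly.

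I expect the main obstacle to be making this inductive step airtight. One must argue that a prefix genuinely \emph{persists} in $\mathcal{K}$ until it is processed rather than being silently discarded, and then combine persistence with the termination guarantee of Lemma~1 to \emph{force} its extension, all while the outer \texttt{REPEAT} loop may revisit the same node across several passes in an order-dependent fashion. Getting the bookkeeping of ``present at some stage'' versus ``present simultaneously'' precise, without assuming any particular traversal order over $\tilde{\mathcal{N}}$, is where the argument will need the most care.
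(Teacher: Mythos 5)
Your proposal is correct and rests on exactly the key fact the paper's proof uses: an arc $(k_l,k_{l+1})$ belonging to a genuine energy path can never be pruned in Step 3, because its head node can still reach $t$ (via the remaining suffix of the path), so every extension needed to assemble the path's node sequence is present in $\tilde{\mathcal{A}}'_i$ when that sequence is processed. The paper packages this as a short proof by contradiction (``if the sequence were missing from $\hat{\mathcal{K}}$, some arc must have been absent from $\tilde{\mathcal{A}}'$, but such an arc always survives the pruning''), whereas you run the same observation forward as an induction on prefixes and add an explicit soundness check of the Step 18 index-set expansion; this is the same argument, just organized in the contrapositive direction and worked out in more detail.
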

\begin{proof}
Suppose that there exists an energy path generated from Sequence $k=\langle k_1,\ldots, k_l,k_{l+1},\ldots k_{|k|}\rangle$ which is not in $\hat{\mathcal{K}}$ when Algorithm \ref{pathalgo} terminates. That means, in a certain iteration, some segments of $k$, say $(k_l,k_{l+1})$ with $k_l=i$, are missing from $\tilde{\mathcal{A}}_i'$, and in turn missing from $\tilde{\mathcal{A}}'$, as all arcs with a starting node equal to $i$ have been used to construct $\mathcal{K}_{ki}'$. However, as long as $(k_l,k_{l+1})$ exists in $\tilde{\mathcal{A}}$ and a path can be formed from $k_{l+1}$ to $t$, $(k_l,k_{l+1})$ must be in $\tilde{\mathcal{A}}'$. This induces a contradiction and thus Algorithm \ref{pathalgo} can determine the whole $\mathcal{P}(s,t)$.
\end{proof}

\begin{corollary}
For any energy source $s$ and destination $t$, the number of energy paths in $\mathcal{P}(s,t)$ is finite.
\end{corollary}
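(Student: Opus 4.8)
The plan is to leverage the three preceding results so that the corollary reduces to a counting argument. By Theorem~\ref{thm:path}, the set $\mathcal{P}(s,t)$ coincides exactly with the collection of energy paths that Algorithm~\ref{pathalgo} produces from the terminal set $\hat{\mathcal{K}}$. It therefore suffices to show that this collection is finite, which I would establish in two stages: first that $\hat{\mathcal{K}}$ itself contains finitely many node sequences, and then that each such sequence spawns only finitely many energy paths in Step~18.

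For the first stage, I would observe that $\tilde{G}(\tilde{\mathcal{N}},\tilde{\mathcal{A}})$ is a finite graph, since $\tilde{\mathcal{N}}=\mathcal{N}$ and $\mathcal{N}$ is finite. Every sequence placed in $\hat{\mathcal{K}}$ corresponds to a loop-free energy path (loop-freeness of energy paths was argued in Section~\ref{sec:model}), so as a walk on $\tilde{G}$ it visits each node at most once and hence has length at most $|\mathcal{N}|$. The number of simple paths from $s$ to $t$ in a finite graph is finite, so $|\hat{\mathcal{K}}|<\infty$; alternatively one may appeal directly to Lemma~\ref{lm:stop}, whose termination guarantee bounds the number of iterations, each of which appends only finitely many sequences to $\hat{\mathcal{K}}$.

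For the second stage, recall from Section~\ref{sec:construction} that a sequence $k=\langle k_1,\ldots,k_{|k|}\rangle$ in $\hat{\mathcal{K}}$ yields at most $|\mathcal{I}_{(k_1,k_2)}| \times \cdots \times |\mathcal{I}_{(k_{|k|-1},k_{|k|})}|$ energy paths. Each index set $\mathcal{I}_a$ is a subset of the route-index set $\{1,\ldots,|\mathcal{R}|\}$, which is finite because the participating fleet induces only finitely many recorded vehicular routes; hence every factor in the product is finite and so is the product. Since $\mathcal{P}(s,t)$ is the union over the finitely many $k\in\hat{\mathcal{K}}$ of these finitely many paths, it is a finite union of finite sets and is therefore finite.

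I expect the only genuine subtlety to be justifying that $\hat{\mathcal{K}}$ is finite rather than merely that the algorithm halts. Lemma~\ref{lm:stop} asserts termination, but to convert ``the loop ends'' into ``finitely many sequences are produced'' I would lean on the loop-free structure together with the finiteness of $\mathcal{N}$ to cap the sequence length, which is the cleanest way to close that gap.
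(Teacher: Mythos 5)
Your proposal is correct and follows essentially the same route as the paper: reduce to Theorem~\ref{thm:path}, establish that $\hat{\mathcal{K}}$ is finite (the paper does this via Lemma~\ref{lm:stop} plus the observation that each iteration handles only finitely many sequences, one of the two justifications you offer), and then note that finiteness of $\mathcal{R}$ makes each $\mathcal{I}_a$ finite so that every $k\in\hat{\mathcal{K}}$ yields only finitely many energy paths in Step~18. Your added remark that loop-freeness caps sequence length at $|\mathcal{N}|$ is a slightly more self-contained way to bound $|\hat{\mathcal{K}}|$, but it does not change the structure of the argument.
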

\begin{proof}
From Theorem \ref{thm:path}, the whole $\mathcal{P}(s,t)$ can be determined when Algorithm \ref{pathalgo} terminates. From Lemma \ref{lm:stop}, the algorithm must terminate after a certain number of iterations, each of which can only manipulate a finite number of sequences in $\mathcal{K}$. Thus the number of sequences in $\hat{\mathcal{K}}$ is finite when the iterations terminate.
As $\mathcal{R}$ is finite, $\mathcal{I}_{\tilde{\mathcal{A}}}$ is also finite. Thus the number of energy paths constructable from each $k$ in $\hat{\mathcal{K}}$ must be finite.
Hence the number of energy paths in $\mathcal{P}(s,t)$ is finite.
\end{proof}

\begin{lemma} \label{lm:length}
An energy path $p_j(s,t)$ on $G(\mathcal{N},\mathcal{A})$ is composed of a maximum of $(|\mathcal{N}|-1)$ arcs, constituted from at most $(|\mathcal{N}|-1)$ vehicular routes.
\end{lemma}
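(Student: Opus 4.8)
The plan is to reduce the claim to the elementary fact that a loop-free path in an $|\mathcal{N}|$-node graph uses at most $|\mathcal{N}|-1$ arcs, and then to translate the arc bound into a bound on the number of vehicular routes. The two ingredients I would draw on are the loop-free property of energy paths (established earlier via the argument around Fig.~\ref{fig:loop_storage}) and the no-repeated-index observation recorded when retrieving $\mathcal{P}(s,t)$ in Step~18.

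First I would view $p_j(s,t)$ as a walk on $G(\mathcal{N},\mathcal{A})$: expand each vehicular sub-route $r_i^j(n_i,m_i)$ into its constituent arcs of $\mathcal{A}$ and concatenate them, using the junction-matching conditions (i)--(iii) to glue consecutive segments. Because the energy path is loop-free, every road junction encountered along this expanded walk is distinct, so the walk is a sequence of pairwise-distinct nodes $s = v_0, v_1, \ldots, v_m = t$ with all $v_\ell \in \mathcal{N}$. Distinctness forces $m+1 \le |\mathcal{N}|$, hence $m \le |\mathcal{N}|-1$. Since the walk uses exactly the $m$ arcs $(v_0,v_1),\ldots,(v_{m-1},v_m)$, this establishes the first claim that $p_j(s,t)$ comprises at most $|\mathcal{N}|-1$ arcs.

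For the second claim I would partition the same arc sequence back into the $|p_j|$ consecutive segments corresponding to the vehicular sub-routes. Each segment contributes at least one arc, so $|p_j| \le m \le |\mathcal{N}|-1$. Finally, invoking the Step~18 observation that an index combination with a repeated vehicular-route index never yields a valid energy path, each of the $|p_j|$ segments must come from a distinct vehicular route; hence the number of routes used equals $|p_j| \le |\mathcal{N}|-1$, which is the second claim.

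The step that deserves the most care, and the only genuinely VEN-specific part of the argument, is the node-distinctness claim in the second paragraph: I must argue that the loop-free property constrains \emph{all} junctions on the expanded walk — including the intermediate junctions interior to each sub-route $r_i^j(n_i,m_i)$ — and not merely the $|p_j|+1$ boundary junctions at which charging and discharging occur. Once distinctness over the full walk is justified, both counting bounds are immediate, so the remaining work is purely bookkeeping.
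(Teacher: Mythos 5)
Your proof is correct and follows essentially the same route as the paper's: loop-freeness forces the expanded walk to visit distinct junctions (i.e., to be at most a Hamiltonian path in $G(\mathcal{N},\mathcal{A})$), giving the $|\mathcal{N}|-1$ arc bound, and the route bound follows because each sub-route segment consumes at least one arc. Your version is merely more explicit — in particular about distinctness of interior junctions and the no-repeated-index observation — where the paper compresses both points into the phrase ``Hamiltonian path'' and the worst case of one route per arc.
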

\begin{proof}
Since loops are not allowed, the longest possible path from $s$ to $t$ is a Hamiltonian path in $G(\mathcal{N},\mathcal{A})$, which has at most $(|\mathcal{N}|-1)$ arcs.  For this longest path, the worst case is that each arc is originated from a different vehicular route. Hence, an energy path can be constituted from at most $(|\mathcal{N}|-1)$ vehicular routes.
\end{proof}

Define 
\begin{align} \label{fn}
f(n)=\begin{cases}
1+(n-1)f(n-1) & \mbox{if } n>1,\\
1							& \mbox{if } n=1,\\
0							& \mbox{otherwise}. 
\end{cases}
\end{align}
\begin{lemma} \label{lm:cardP}
The cardinality of $\hat{\mathcal{K}}$ is upper bounded by $f(|\mathcal{N}|-1)$.
\end{lemma}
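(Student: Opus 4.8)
The plan is to recognize that every element of $\hat{\mathcal{K}}$ is a loop-free node sequence on $\tilde{G}(\tilde{\mathcal{N}},\tilde{\mathcal{A}})$ that starts at $s$ and ends at $t$, i.e.\ a simple directed path from $s$ to $t$ on a graph whose node set is $\tilde{\mathcal{N}}=\mathcal{N}$. Since distinct sequences are distinct simple paths, $|\hat{\mathcal{K}}|$ is at most the number of simple $s$--$t$ paths in $\tilde{G}$. Because $\tilde{G}$ (indeed its restriction to $\tilde{\mathcal{A}}'$) is a subgraph of the complete directed graph on $|\mathcal{N}|$ nodes, and a simple path of a subgraph is also a simple path of the supergraph, this count is maximized when $\tilde{G}$ is complete. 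So it suffices to bound the number of simple $s$--$t$ paths in the complete digraph on $|\mathcal{N}|$ nodes.

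To count these, first I would note that by Lemma \ref{lm:length} (loop-freeness) such a path has at most $|\mathcal{N}|$ nodes, so besides the fixed endpoints $s$ and $t$ it traverses an ordered subset of the remaining $m:=|\mathcal{N}|-2$ intermediate nodes. I would then set up a recursion on $m$. Writing $h(m)$ for the number of simple paths from a fixed source to a fixed destination in the complete digraph with $m$ intermediate nodes, I split on the first arc leaving $s$: either the path goes directly to $t$ (one possibility), or it first visits one of the $m$ intermediate nodes $v$ and is then completed by a simple path from $v$ to $t$ avoiding $s$ and $v$. In the complete digraph the induced graph on $\{v,t\}$ together with the $m-1$ remaining intermediate nodes is again complete, so each of the $m$ choices of $v$ admits $h(m-1)$ completions, with no double counting. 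This yields $h(m)=1+m\,h(m-1)$ with base case $h(0)=1$ (only the direct arc $s\to t$).

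Finally I would reconcile this recursion with the definition of $f$. Comparing $h(m)=1+m\,h(m-1)$, $h(0)=1$, with $f(n)=1+(n-1)f(n-1)$, $f(1)=1$, a one-line induction gives $h(m)=f(m+1)$ for all $m\ge 0$: assuming $h(m-1)=f(m)$, one has $h(m)=1+m\,f(m)=f(m+1)$. Substituting $m=|\mathcal{N}|-2$ gives $h(|\mathcal{N}|-2)=f(|\mathcal{N}|-1)$, and therefore $|\hat{\mathcal{K}}|\le f(|\mathcal{N}|-1)$.

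The hard part will not be the algebra but the two structural justifications: that the worst case is genuinely the complete digraph (which follows from monotonicity of the simple-path count under arc addition), and that after the first step the residual counting problem is truly an independent complete digraph on the reduced node set, so that the factor $m\,h(m-1)$ is valid. A minor but essential point to state carefully is that elements of $\hat{\mathcal{K}}$ are node-distinct, as this is exactly what lets us treat them as simple paths rather than arbitrary walks; once that is in place, the identification with $f$ is routine.
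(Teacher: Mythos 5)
Your proof is correct and takes essentially the same route as the paper's: both reduce the bound to counting simple $s$--$t$ paths in the complete digraph on $|\mathcal{N}|$ nodes and recover $f(|\mathcal{N}|-1)$ from the recursion obtained by splitting on the first arc out of $s$ (direct to $t$, or to one of the remaining intermediate nodes followed by a path in the reduced complete digraph). Your write-up merely makes explicit two points the paper leaves implicit, namely the monotonicity argument for why the complete digraph is the worst case and the index reconciliation $h(m)=f(m+1)$.
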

\begin{proof}
Each energy path can be represented by a sequence of nodes based on $\tilde{G}(\tilde{\mathcal{N}},\tilde{\mathcal{A}})$. Since an energy path does not contain loops, no identical nodes appear in the sequences in $\hat{\mathcal{K}}$. The largest quantity of energy paths happens when $\tilde{G}(\tilde{\mathcal{N}},\tilde{\mathcal{A}})$ is a complete graph. We now consider the complete graph in the following. 

We start with the node $s$. Since $s$ is connected to every node in $\mathcal{N}\setminus\{s\}$, We can extend $\langle s\rangle$ to form $(|\mathcal{N}|-1)$ new sequences, given by $\langle s, i\rangle,\forall i\in \mathcal{N}\setminus\{s\}$. One of these sequence is $\langle s, t\rangle$ and we keep on extend the rest of the $(|\mathcal{N}|-2)$ sequences. For each of these sequences, we can extend it with an additional node with $(|\mathcal{N}|-2)$ possible choices. After extension, one of them ends with the node $t$ and we continue to extend the rest of the resultant sequences. Hence the cardinality of $\hat{\mathcal{K}}$ is represented by $f(|\mathcal{N}|-1)$.
\end{proof}

\begin{theorem} \label{thm:runtime}
The total running time of Algorithm \ref{pathalgo} is bounded by $\mathcal{O}(|\mathcal{N}|^{|\mathcal{N}|-1}(|\mathcal{N}|-1)^{|\mathcal{N}|}(|\mathcal{N}|-2)!)$.
\end{theorem}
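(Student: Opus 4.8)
The plan is to bound the cost phase by phase, writing $N=|\mathcal{N}|$ for brevity, and to argue that the main loop (Steps 5--17) together with the path extraction (Step 18) dominate a polynomial preprocessing stage. First I would dispose of Steps 1--4: building $\tilde{G}$ and the index collection $\mathcal{I}_{\tilde{\mathcal{A}}}$ is a single scan of $\mathcal{R}$, Step 2 runs a shortest-path computation from each node (so $\mathcal{O}(N^3)$ with Dijkstra), and Steps 3--4 are edge scans costing $\mathcal{O}(N^2)$. All of this is polynomial and is absorbed into the final bound, so the real work is to account for the repeated extension of sequences and for the combinatorial blow-up in Step 18.

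For the main loop I would bound three quantities and multiply them by the per-extension cost. For the number of REPEAT iterations, Lemma~\ref{lm:length} caps any path at $N-1$ arcs, and since each surviving sequence is elongated by at least one node whenever its end node is processed, the loop terminates within $\mathcal{O}(N)$ passes (consistent with Lemma~\ref{lm:stop}). For the number of sequences the loop ever manipulates, I would reuse the counting argument behind Lemma~\ref{lm:cardP}: every sequence is a loop-free node sequence out of $s$, so their total number is at most $f(N-1)=\mathcal{O}((N-2)!)$, which supplies the factorial factor; the cruder estimate that ignores loop-freeness and counts all strings of length $\le N-1$ over $N$ symbols instead yields the $N^{N-1}$ factor. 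For the per-extension work, extending a sequence $k$ ending at $i$ spawns $|\tilde{\mathcal{A}}'_i|\le N-1$ children, each a copy of length $\le N-1$, so Steps 7, 9, and 10 cost $\mathcal{O}(N^2)$ plus the set update.

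Finally, Step 18: for each of the $\le f(N-1)=\mathcal{O}((N-2)!)$ sequences in $\hat{\mathcal{K}}$ I enumerate all combinations of the index sets $\mathcal{I}_{(k_1,k_2)},\ldots,\mathcal{I}_{(k_{|k|-1},k_{|k|})}$. By Lemma~\ref{lm:length} such a sequence has at most $N-1$ arcs, and crudely bounding each index set gives at most $(N-1)^{N}$ energy paths per sequence, which is the origin of the $(N-1)^N$ factor. Since the main loop and Step 18 run sequentially their costs add, and I would merge them into a single product upper bound using $a+b\le ab$; collecting the crude factors and discarding the dominated polynomial terms then gives the stated $\mathcal{O}(N^{N-1}(N-1)^N(N-2)!)$. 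I would be explicit that this expression is deliberately loose — a tighter analysis along the same lines yields a smaller bound — but the stated form follows directly from these per-phase crude bounds.

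The delicate point is counting the sequences the loop actually processes, because a child created while processing node $i$ can itself be extended later in the \emph{same} pass once a higher-indexed end node is reached; thus the per-pass workload is not simply the size of $\mathcal{K}$ at the start of the pass. I would sidestep this entirely by charging work to sequences rather than to passes: each distinct loop-free sequence is extended exactly once over the whole run (it is replaced by its children the moment its end node is processed), so the total number of extension events is governed by the Lemma~\ref{lm:cardP} count regardless of the within-pass ordering. The remaining care is to confirm that the Step-18 enumeration factor is a legitimate bound on the paths-per-sequence and that merging the two sequential phases into a product does not hide the $(N-2)!$ term — both of which follow once the sequence count and the arc-length bound of Lemma~\ref{lm:length} are in hand.
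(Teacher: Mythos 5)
Your overall decomposition (polynomial preprocessing, main loop bounded via Lemma~\ref{lm:cardP} and Lemma~\ref{lm:length}, combinatorial enumeration in Step~18) is the same as the paper's, and your charging argument that each loop-free sequence is extended exactly once is a nice way to handle the within-pass ordering. But the derivation of the dominant term does not hold together. The paper obtains the stated bound from Step~18 \emph{alone}: there are at most $f(|\mathcal{N}|-1)=\mathcal{O}((|\mathcal{N}|-2)!)$ sequences in $\hat{\mathcal{K}}$; each index set $\mathcal{I}_{(k_i,k_{i+1})}$ has at most $|\mathcal{R}|\leq |\mathcal{N}|(|\mathcal{N}|-1)$ elements, so a sequence of at most $|\mathcal{N}|-1$ arcs yields at most $|\mathcal{R}|^{|\mathcal{N}|-1}\leq |\mathcal{N}|^{|\mathcal{N}|-1}(|\mathcal{N}|-1)^{|\mathcal{N}|-1}$ index combinations; and each combination costs $\mathcal{O}(|\mathcal{N}|-1)$ to process. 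The product is exactly $\mathcal{O}(|\mathcal{N}|^{|\mathcal{N}|-1}(|\mathcal{N}|-1)^{|\mathcal{N}|}(|\mathcal{N}|-2)!)$, and this is where \emph{both} the $|\mathcal{N}|^{|\mathcal{N}|-1}$ and the $(|\mathcal{N}|-1)^{|\mathcal{N}|}$ factors come from. Your accounting instead attributes $|\mathcal{N}|^{|\mathcal{N}|-1}$ to a ``crude string count'' of the sequences in the main loop and caps the paths per sequence at $(|\mathcal{N}|-1)^{|\mathcal{N}|}$. The first is incoherent as a multiplicative factor (you cannot use both the loop-free count $(|\mathcal{N}|-2)!$ and the crude count $|\mathcal{N}|^{|\mathcal{N}|-1}$ of the \emph{same} set of sequences in one product), and the second is unjustified and too small: nothing bounds $|\mathcal{I}_a|$ by anything near $|\mathcal{N}|-1$, since up to $|\mathcal{R}|$ vehicular routes may contain a given arc of $\tilde{G}$.

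The final assembly is also problematic. Merging the two sequential phases ``into a single product upper bound using $a+b\leq ab$'' would produce a factor $((|\mathcal{N}|-2)!)^2$, since both phase costs carry the sequence count; that expression is \emph{not} $\mathcal{O}$ of the claimed bound, so the theorem would not follow. The correct observation is simply that the main-loop cost $\mathcal{O}((|\mathcal{N}|-2)!\,|\mathcal{N}|)$ is dominated by the Step~18 cost, so the sum is $\mathcal{O}$ of the latter. With the Step~18 term derived as above, the proof closes; as written, yours does not.
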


\begin{proof}
The algorithm can be divided into three parts, namely Steps 1--4, Steps 5--17, and Step 18.

The first part is computed once. In Step 1, we add arcs to $G(\mathcal{N},\mathcal{A})$ to form $\tilde{G}(\tilde{\mathcal{N}},\tilde{\mathcal{A}})$ based on $\mathcal{R}$. For each vehicular route in $\mathcal{R}$, there are at most $|\mathcal{A}|$ arcs and we can create at most $(|\mathcal{N}|-1)$ additional arcs. Each node in $G(\mathcal{N},\mathcal{A})$ can be the origin of at most $(|\mathcal{N}|-1)$ vehicular routes, each of which ends at a distinct node. So we have $|\mathcal{R}|\leq |\mathcal{N}|(|\mathcal{N}|-1)$. Therefore the running time of Step 1 is $\mathcal{O}(|\mathcal{N}|^3)$.

Step 2 is equivalent to solving a single-destination shortest-path problem. By Dijkstra's algorithm, its running time is $\mathcal{O}(|\mathcal{N}|^2)$ \cite{dijkstra}.

In Step 3, we eliminate arcs from $\tilde{\mathcal{A}}$. As $|\tilde{\mathcal{A}}|\leq |\mathcal{N}|(|\mathcal{N}|-1)$, the running time of Step 3 is $\mathcal{O}(|\mathcal{N}|^2)$.

Trivially, Step 4 takes a running time of $\mathcal{O}(1)$.

The second part contains a repeat-loop (Steps 5--17). Each loop is further composed of two sub-parts, i.e., Steps 6--12 and Steps 13--16. The former examines each processing sequence $k$ in $\mathcal{K}$ and replaces it with a set of new sequences, each of which is formed by extending $k$ along a possible arc from $\tilde{A}$. This is equivalent to enumerating all possible sequences. By Lemma \ref{lm:cardP}, there are at most $f(|\mathcal{N}|-1)$ sequences. By Lemma \ref{lm:length}, each sequence involves at most $(|\mathcal{N}|-1)$ extensions when appending each node one by one to the sequence.
Eq. \eqref{fn} can be written as
\begin{align*}
f(n) =& 1+(n-1)+(n-1)(n-2)+\ldots+\\
			&(n-1)\times\ldots\times 2 + (n-1)\times\ldots\times 2 \\
		=& \frac{(n-1)!}{(n-1)!} + \frac{(n-1)!}{(n-2)!}+ \ldots + \frac{(n-1)!}{1!}+\frac{(n-1)!}{1!}\\
		=&(n-1)! \sum_{i=0}^{n-1}{\frac{1}{i!}}\\
		\leq& (n-1)!e.
\end{align*}
So the number of operations involved in this sub-part is upper bounded by $(|\mathcal{N}|-2)!e(|\mathcal{N}|-1)$.
 This sub-part has a running time of $\mathcal{O}((|\mathcal{N}|-2)!|\mathcal{N}|)$.

The second sub-part is to move any complete sequences with the end nodes equal to $t$ to $\hat{\mathcal{K}}$ from $\mathcal{K}$. As there are at most $f(|\mathcal{N}|-1)$ sequences, it has a running time of $\mathcal{O}((|\mathcal{N}|-2)!)$.

The third part (Step 18) is to construct energy paths from each $k$ in $\hat{\mathcal{K}}$. To do this, for each $k=\langle k_1,\ldots,k_{|k|}\rangle$, we create all combinations of the index sets of the involved sub-routes from $\mathcal{I}_{\tilde{\mathcal{A}}}$, i.e., $\{\mathcal{I}_{(k_1,k_2)},\mathcal{I}_{(k_2,k_3)},\ldots,\mathcal{I}_{(k_{|k|-1},k_{|k|})} \}$. The number of combinations created for each $k$ is 
\begin{align*}
|\mathcal{I}_{(k_1,k_2)}|\times|\mathcal{I}_{(k_2,k_3)}|\times \ldots \times |\mathcal{I}_{(k_{|k|-1},k_{|k|})}| \leq |\mathcal{R}|^{|\mathcal{N}|-1}\\
\leq |\mathcal{N}|^{|\mathcal{N}|-1}(|\mathcal{N}|-1)^{|\mathcal{N}|-1}.
\end{align*}
As each combination contains at most $|\mathcal{N}|-1$ indices and there are at most $f(|\mathcal{N}|-1)$ sequences in $\hat{\mathcal{K}}$, the running time of this part is $\mathcal{O}(|\mathcal{N}|^{|\mathcal{N}|-1}(|\mathcal{N}|-1)^{|\mathcal{N}|}(|\mathcal{N}|-2)!)$.

Therefore, the total running time of Algorithm \ref{pathalgo} is bounded by $\mathcal{O}(|\mathcal{N}|^{|\mathcal{N}|-1}(|\mathcal{N}|-1)^{|\mathcal{N}|}(|\mathcal{N}|-2)!)$.
\end{proof}

\subsection{Utilization of $\mathcal{P}(s,t)$} \label{subsec:optimization}

The purpose of Algorithm \ref {pathalgo} is to enumerate all possible energy paths connecting $s$ and $t$. However,
to transmit energy from $s$ to $t$, after constructing the energy paths, we also need configure the energy transmission scheme by determining how much and how fast energy should be transported along the energy paths. In other words, we assign the energy transmission rate $g_j$ for each energy path $j$ in the allowed time $T$. 
As discussed in \cite{VEN}, different objectives are possible when utilizing VEN for energy transmission. For example, we may decide to maximize the total amount of energy transferred in a given time window or minimize the total energy loss with certain quantity of transferred energy guarantee. The energy transmission rate assignment relies on the system objective. 

As explained in \cite{VEN}, we can determine the complete energy transmission scheme to achieve various objectives systemically in the form of optimization. Determination of  $\mathcal{P}(s,t)$ can facilitate the formulations. 
With the whole set of energy paths $\mathcal{P}(s,t)$, the problem of configuring the network to fulfill a certain transmission objective can be reduced to assigning the energy transmission rates for all possible energy paths. If an energy path is not required, we can just assign its transmission rate with a zero value. 
Let $h_a$ be the vehicular flow of road connection $a\in \mathcal{A}$ and $\underline{X}$ be the energy target.  We demonstrate the utilization of $\mathcal{P}(s,t)$ with the example of minimizing the total energy loss with a  guaranteed transferable amount of energy, as follows: 
\begin{subequations}
\label{minopt}
\begin{align}
\text{minimize}\quad 	& \sum_{j=1}^{|\mathcal{P}(s,t)|}{(\frac{1}{z^{|p_j|}}-1)   x_j} \label{opt:obj}\\
\text{subject to}\quad 
& 0\leq x_j\leq (T-d(p_j))z^{|p_j|}g_j, \quad j=1,\ldots, |\mathcal{P}(s,t)| \label{opt:con1}\\
& 0\leq g_j \leq w f_i^j, \quad i=1,\ldots,|p_j|, j=1,\ldots, |\mathcal{P}(s,t)| \label{opt:con2}\\
& \sum_{j|a\in p_j} \frac{g_j}{w} \leq h_a, \quad a\in \mathcal{A} \label{opt:con3}\\
& \sum_{j=1}^{|\mathcal{P}(s,t)|}{x_j} \geq \underline{X} \label{opt:con4}
\end{align}
\end{subequations}
We minimize the total incurred energy loss in \eqref{opt:obj} based on \eqref{energyLoss}. For each energy path $j$ in $\mathcal{P}(s,t)$, \eqref{opt:con1}, from \eqref{totalenergy}, limits the amount of energy transmitted along  $j$ with rate $g_j$ in a time period of duration $T$. \eqref{opt:con2} defines $g_j$ based on the packet size $w$ and the vehicular flows $f_i^j$. When multiple energy paths share a road connection, \eqref{opt:con3} ensures that each connection has sufficient car flow to support all the involved energy paths. \eqref{opt:con4} ensures that the energy transmission target is satisfied. In \eqref{minopt}, $z$, $T$, $f_i^j$, $w$, $h_a$, and $\underline{X}$ are system parameters. When $\mathcal{P}(s,t)$ is given, we can determine $|p_j|$ and $d(p_j)$. $x_j$ and $g_j$ are the only variables of the problems. It can be seen that \eqref{minopt} is a linear program (LP) and it can be easily solved by a standard LP solver.

As the number of possible energy paths is generally huge, it is not surprising that the complexity of Algorithm \ref{pathalgo} is even greater than factorial time. For a fairly large network, it may be difficult to enumerate the whole $\mathcal{P}(s,t)$. In fact, $\mathcal{P}(s,t)$ confines the scope of search in the feasible region of \eqref{minopt}. If we only have a subset of energy paths, denoted by $\mathcal{P}'\subset\mathcal{P}(s,t)$, we can construct a similar problem as \eqref{minopt} with $\mathcal{P}'$. This problem is still an LP and easy to be solved. The optimal solution deduced from $\mathcal{P}'$ is in fact a sub-optimal solution of the original problem given in \eqref{minopt} with $\mathcal{P}(s,t)$. 
It is easy to modify Algorithm \ref{pathalgo} to construct a subset of $\mathcal{P}(s,t)$.
Therefore, when time is insufficient to construct the whole $\mathcal{P}(s,t)$, and we can only get a subset of $\mathcal{P}(s,t)$, the methodologies of solving most VEN problems can still carry through and we can still obtain sub-optimal solutions.

\subsection{Discussion}
When configuring the whole $\mathcal{P}(s,t)$, we ignore the details of  vehicular flows of the underlying road connections. This works fine as we can assign $g_j$ for all $j\in \mathcal{P}(s,t)$ systemically in terms of their optimality after determining $\mathcal{P}(s,t)$. The reason why we can do so is that all ``interactions'' among energy paths can be taken into account when conducting transmission rate assignment with all energy paths known.
%
When $\mathcal{P}(s,t)$ is pre-determined, the optimization approach discussed in Section \ref{subsec:optimization} allows us to decide the \textit{optimal} energy transmission rate for each energy path easily. 
However,  from Lemma \ref{lm:cardP}, the size of $\mathcal{P}(s,t)$ can grow super-exponentially with the size of the network and thus  determining $\mathcal{P}(s,t)$ is not trivial, especially when the network is large. In such cases, we may only be able to construct some of the energy paths instead of the whole $\mathcal{P}(s,t)$. 
This allows us to obtain sub-optimal transmission rates by solving the optimization problem with a subset of $\mathcal{P}(s,t)$. However, the performance of this method depends on the ``quality'' of the chosen subset. As some ``good'' energy paths may not have been included in the chosen subset of  energy paths, we cannot determine the best configurations of the energy paths to transmit energy. In this approach, when deciding the subset, we do not take their characteristics into account. 
To strive for better performance, we may need to construct the required energy paths with the consideration of the underlying vehicular flows and other information.

\section{Heuristic for the Power Loss Minimization Problem} \label{sec:heuristic}

To solve the power loss minimization problem given in \eqref{minopt}, we need to determine a set of energy paths $\mathcal{P}'=\{p_j\}$ and their corresponding energy transmission rates $g_j$. Then, by \eqref{totalenergy} and \eqref{energyLoss}, we determine the  transferred energy and energy loss accordingly. $\mathcal{P}'$ need not be the whole $\mathcal{P}(s,t)$ as long as the energy paths in $\mathcal{P}'$ have the properties required for solving \eqref{minopt}.
In this section,  we develop a heuristic to solve \eqref{minopt} by constructing $\mathcal{P}'$ and assigning their corresponding $g_j$ at the same time, based on the properties of the energy paths.

\begin{figure}[!t]
\centering
\includegraphics[width=1.0in]{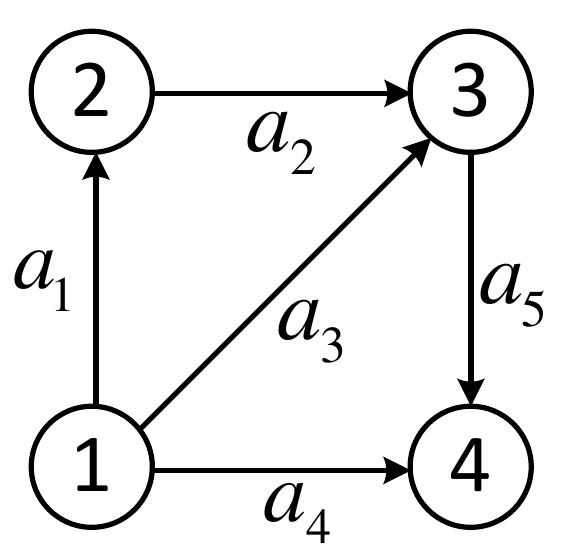}
\caption{A vehicular network of four nodes and five arcs.}
\label{fig:example}
\end{figure}

From \eqref{transferRate}, we can see that $g_j$ depends on the vehicular flows $f_i^j$ of all the composite road connections $a_i$ along Route $j$. However, $a_i$ may be used to construct multiple energy paths simultaneously. Depending on the packet size $w$, the flow of $a_i$ may need to be split and shared by the corresponding energy paths. In other words, the transmission rate assignment of one energy path may affect that of another path. 
Consider the illustrative example given in Fig. \ref{fig:example}, which shows a vehicular network of four nodes and five arcs. Suppose that we transmit energy from Nodes 1 to 4 along three energy paths, $p_1(1,4)=\langle a_1,a_2, a_5 \rangle$, $p_2(1,4)=\langle a_3, a_5 \rangle$, and $p_3(1,4)=\langle a_4 \rangle$. As both $p_1(1,4)$ and $p_2(1,4)$ contain $a_5$, the vehicular flow of $a_5$ may need to be shared by both paths. If $p_2(1,4)$ occupies too much flow of $a_5$, the residue flow of $a_5$ for $p_1(1,4)$ may become a bottleneck such that the vehicular flows available along $a_1$ and $a_2$ may not have been fully utilized. When assigning energy transmission rates $g_1$ and $g_2$, all vehicular flows of $a_1$, $a_2$, $a_3$, and $a_5$ need to be considered.
We may reduce the ``interactions'' of energy paths by avoiding using those paths with road connections being shared. Therefore,  transmission rate assignment may need to be considered together with energy path construction.

Here we focus on large networks where the whole set of energy paths are not economical to compute. We aim to construct only the necessary subset of energy paths and assign the corresponding energy transmission rates to fulfill the system design objective. 
In this section, We aim to minimize the energy loss when transmitting energy to the destination from the source. 

Basically, the heuristic determines a set of energy paths which experience the least charging-discharging cycles to transfer energy. As seen from \eqref{energyLoss}, energy loss of a path is proportional to the amount of energy transferred, i.e., $x_j$, and inversely proportional to the number of  charging-discharging cycles experienced, i.e., $|p_j|$. $\tilde{G}(\tilde{\mathcal{N}},\tilde{\mathcal{A}})$ allows us to find these paths; the number of hops possessed by a path on $\tilde{G}(\tilde{\mathcal{N}},\tilde{\mathcal{A}})$  represents the number of charging-discharging cycles experienced by the corresponding energy path. So we utilize the energy path with the least number of cycles to transfer as much energy as possible. Then we consider  the one with the next least number of cycles and so on until we have reached the energy target $\underline{X}$.

\begin{algorithm}
\caption{Heuristic for power loss minimization} \label{ratealgo}
\begin{algorithmic}[1]
\STATE Construct $\tilde{G}(\tilde{\mathcal{N}},\tilde{\mathcal{A}})$ from $G(\mathcal{N},\mathcal{A})$ with $\mathcal{R}$.
\STATE Define $\mathcal{P}_{L}:=\varnothing$ and $\psi:=0$
\STATE $done :=0$
\REPEAT
	\STATE Determine the energy path $p_j$ with the least number of hops from $s$ to $t$ on $\tilde{G}(\tilde{\mathcal{N}},\tilde{\mathcal{A}})$.
	\STATE Set $\delta := \inf\{f_i^j|r_i^j\in p_j\}$.
	\STATE Set $g_j :=w\delta$ and $x_j:=(T-d(p_j))z^{|p_j|}g_j$.
	\IF {$\psi + x_j< \underline{X}$}
		\STATE Update $\psi:=\psi + x_j$.
		\STATE Update $f_i^j:=f_i^j-\delta$ for all $r_i^j\in p_j$.
		\STATE Update $\mathcal{R}$ and $\tilde{G}(\tilde{\mathcal{N}},\tilde{\mathcal{A}})$.
	\ELSE
		\STATE Set $done:=1$
		\STATE Set $x_j:=\underline{X}-\psi$
		\STATE Set $g_j:=\frac{x_j}{(T-d(p_j))z^{|p_j|}}$
	\ENDIF
	\STATE Update $\mathcal{P}_{L}:=\mathcal{P}_{L}\cup p_j$
\UNTIL {$done=1$}
\RETURN $\mathcal{P}_{L}$ and $\{g_j\}$.
\end{algorithmic}
\end{algorithm}

Algorithm \ref{ratealgo} illustrates the implementation details of the heuristic. Similar to Algorithm \ref{pathalgo}, we first construct 
$\tilde{G}(\tilde{\mathcal{N}},\tilde{\mathcal{A}})$ from $G(\mathcal{N},\mathcal{A})$ with $\mathcal{R}$ (Step 1). Then we initialize $\mathcal{P}_{L}$ for storing the constructed energy paths and $\psi$ for counting the amount of energy which can reach the destination $t$ along the energy paths found in $\mathcal{P}_{L}$ (Step 2). We define a flag $done$ for the repeat-loop next (Step 3). We construct energy paths iteratively until the amount of energy transmittable to $t$, i.e. $\psi$, is greater than or equal to the requested amount $\underline{X}$ (Steps 4--18). In the $j$-th iteration, we first determine the shortest path, in terms of number of hops, from $s$ to $t$ on $\tilde{G}(\tilde{\mathcal{N}},\tilde{\mathcal{A}})$ (Step 5). As each edge of $\tilde{G}(\tilde{\mathcal{N}},\tilde{\mathcal{A}})$ represents a vehicular sub-route, we can construct the energy path $p_j$ by concatenating the corresponding vehicular sub-routes.  Then we set $\delta$ as the vehicular flow $f_i^j$ of the sub-route with the minimum flow along $p_j$ (Step 6). Based on \eqref{transferRate} and \eqref{transferamount}, we determine the maximum possible energy transmission rate $g_j = w\delta$ and the transferable amount of energy $x_j$ (Step 7). 
Next we check if the cumulative transferred energy is still smaller than the energy target $\underline{X}$ (Step 8). If so, we update the total transferable amount of energy by including the amount $x_j$ from $p_j$ (Step 9). After that, we update the vehicular flow $f_i^j$ of each sub-route $r_i^j$ along $p_j$ by subtracting the occupied flow $\delta$ (Step 10). We also update $\mathcal{R}$ and $\tilde{G}(\tilde{\mathcal{N}},\tilde{\mathcal{A}})$ (Step 11) as follows: For those $r_i=\langle a_1^i, \ldots, a_{n_i}^i,\ldots,a_{m_i}^i,\ldots a_{|r_i|^i}\rangle$ in which any sub-routes appeared in $p_j$ have zero flow (say $r_i(n_i,m_i)$), we truncate the segment starting from $r_i(n_i)$ and $r_i$ becomes $\langle a_1^i, \ldots, a_{n_i-1}^i\rangle$. We then re-construct $\tilde{G}(\tilde{\mathcal{N}},\tilde{\mathcal{A}})$ with the updated $\mathcal{R}$. 
If we have accumulated enough transferred energy, $p_j$ is the last energy path required and we set the flag $done=1$ (Step 13). For the last path, we do not need to transfer at its maximum capacity, as determined in Step 7. The amount of energy  to be transferred on $p_j$ is the residual amount, i.e., $\underline{X}-\psi$ (Step 14) and the required transmission rate is determined based on $x_j$ (Step 15) accordingly.
We include $p_j$ into $\mathcal{P}_L$ (Step 17). Finally, we output $\mathcal{P}_{L}$ and the corresponding transmission rate $g_j$ as the solution for the problem (Step 19).


%
\section{Performance Evaluation} \label{sec:performance}

By abuse of notation, we denote the set of energy paths connecting a particular source and destination without specifying $s$ and $t$ by $\mathcal{P}$. We have introduced three methods for VEN routing:
\begin{itemize}
	\item Method I: the optimization-based approach with the whole $\mathcal{P}$ explained in Section \ref{sec:wholeset};
	\item Method II: the optimization-based approach with a partial $\mathcal{P}$ discussed in Section \ref{sec:wholeset}; and
	\item Method III: the heuristic proposed in Section \ref{sec:heuristic}.
\end{itemize} 
We will evaluate their performance by applying them to \eqref{minopt}. Before that, we investigate the growth of cardinality of $\mathcal{P}$, which allows us to obtain more insight to differentiate Methods I and II. 

\subsection{Growth of $|\mathcal{P}|$}
In general, as explained in Section \ref{sec:analysis},  $|\mathcal{P}|$ grows with $|\mathcal{N}|$. Moreover, an increase of the level of vehicular information disclosure results in longer vehicular routes and thus more energy paths will be produced. We examine these factors for the growth of  $|\mathcal{P}|$.

\begin{figure*}[!t]
	\begin{center}
		\subfigure[4-node network]{\label{fig:rand4}\includegraphics[width=3.2in]{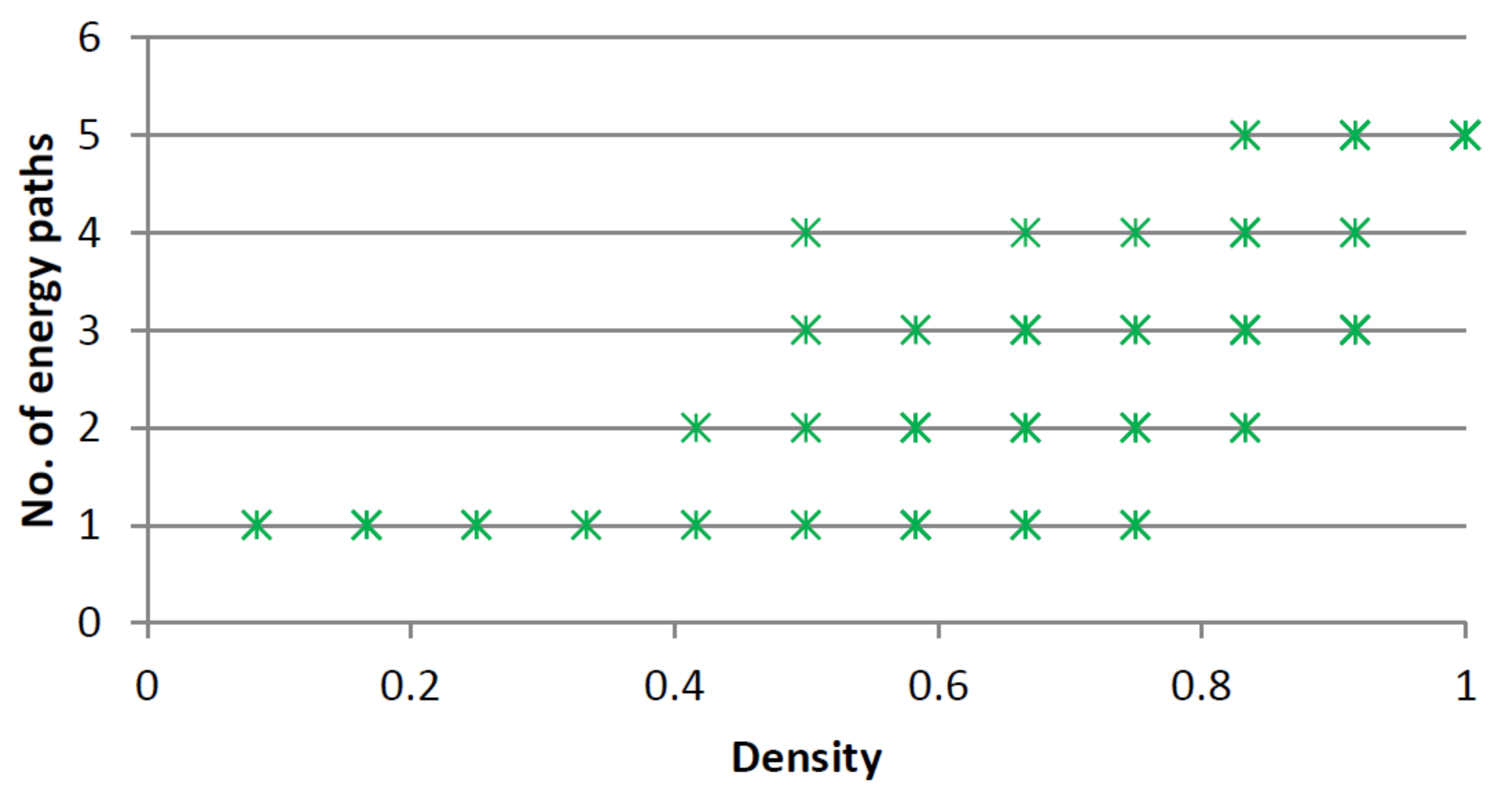}}
    \subfigure[6-node network]{\label{fig:rand6}\includegraphics[width=3.2in]{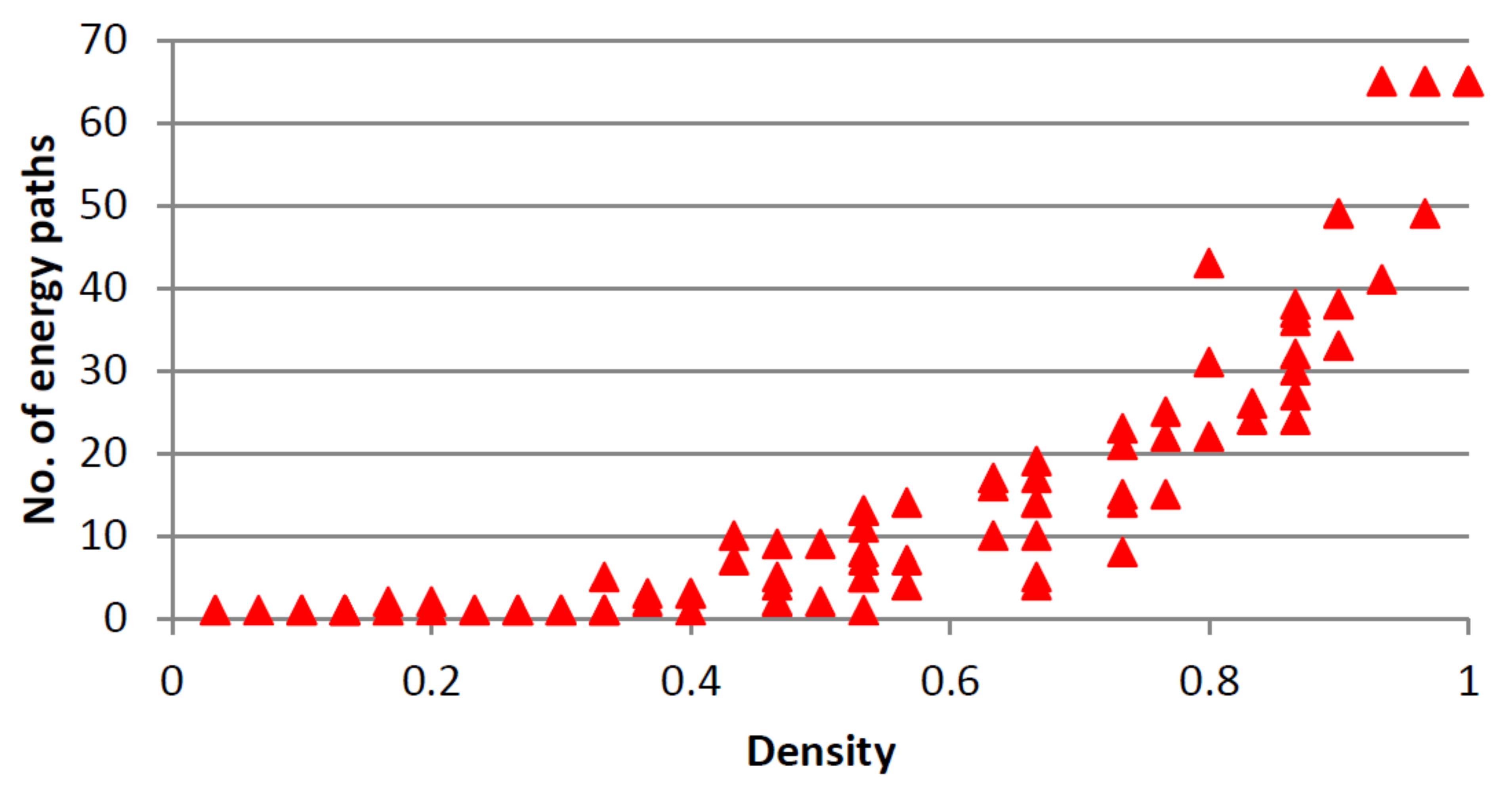}} 
		\subfigure[8-node network]{\label{fig:rand8}\includegraphics[width=3.2in]{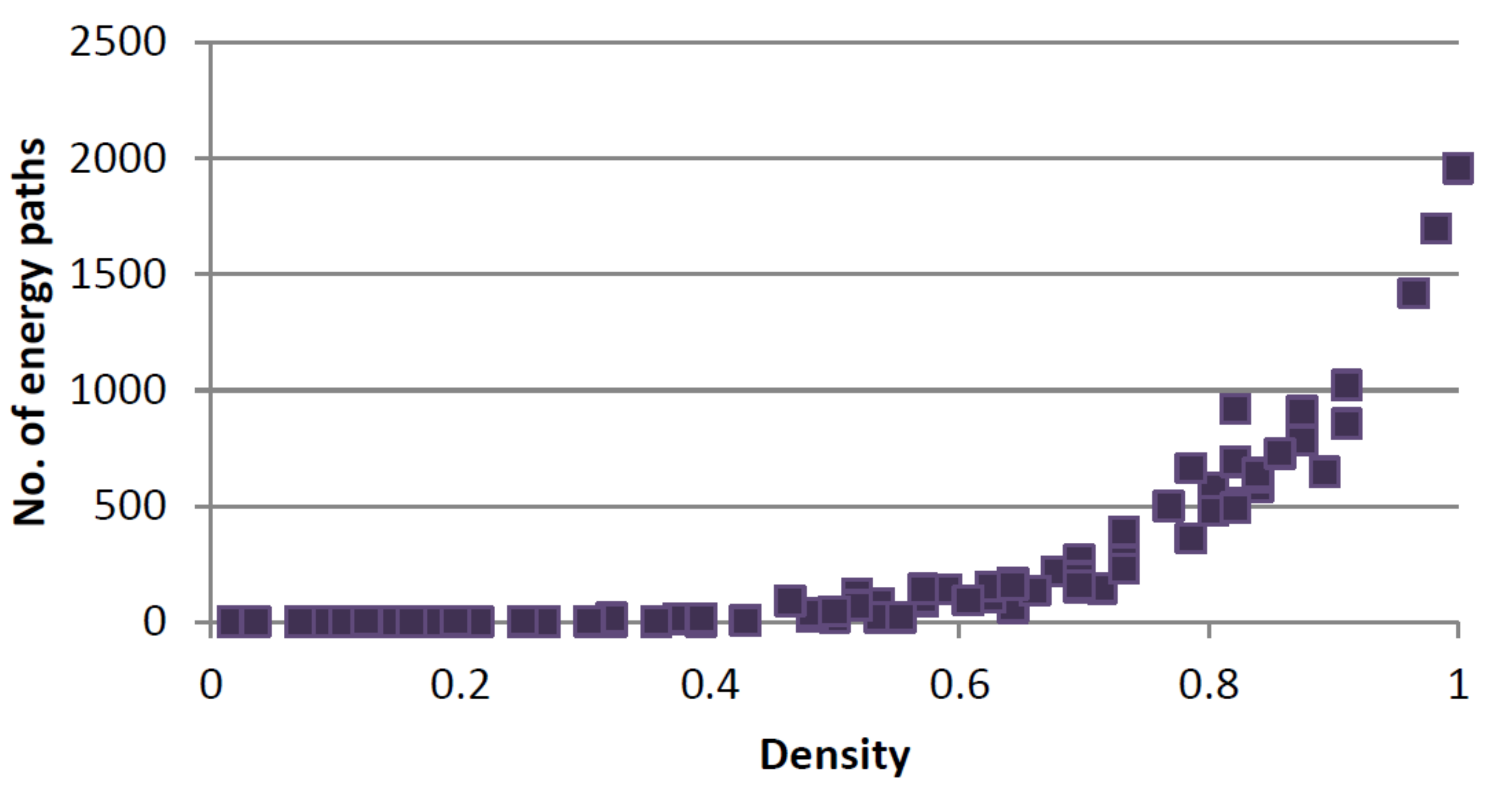}}
    \subfigure[10-node network]{\label{fig:rand10}\includegraphics[width=3.2in]{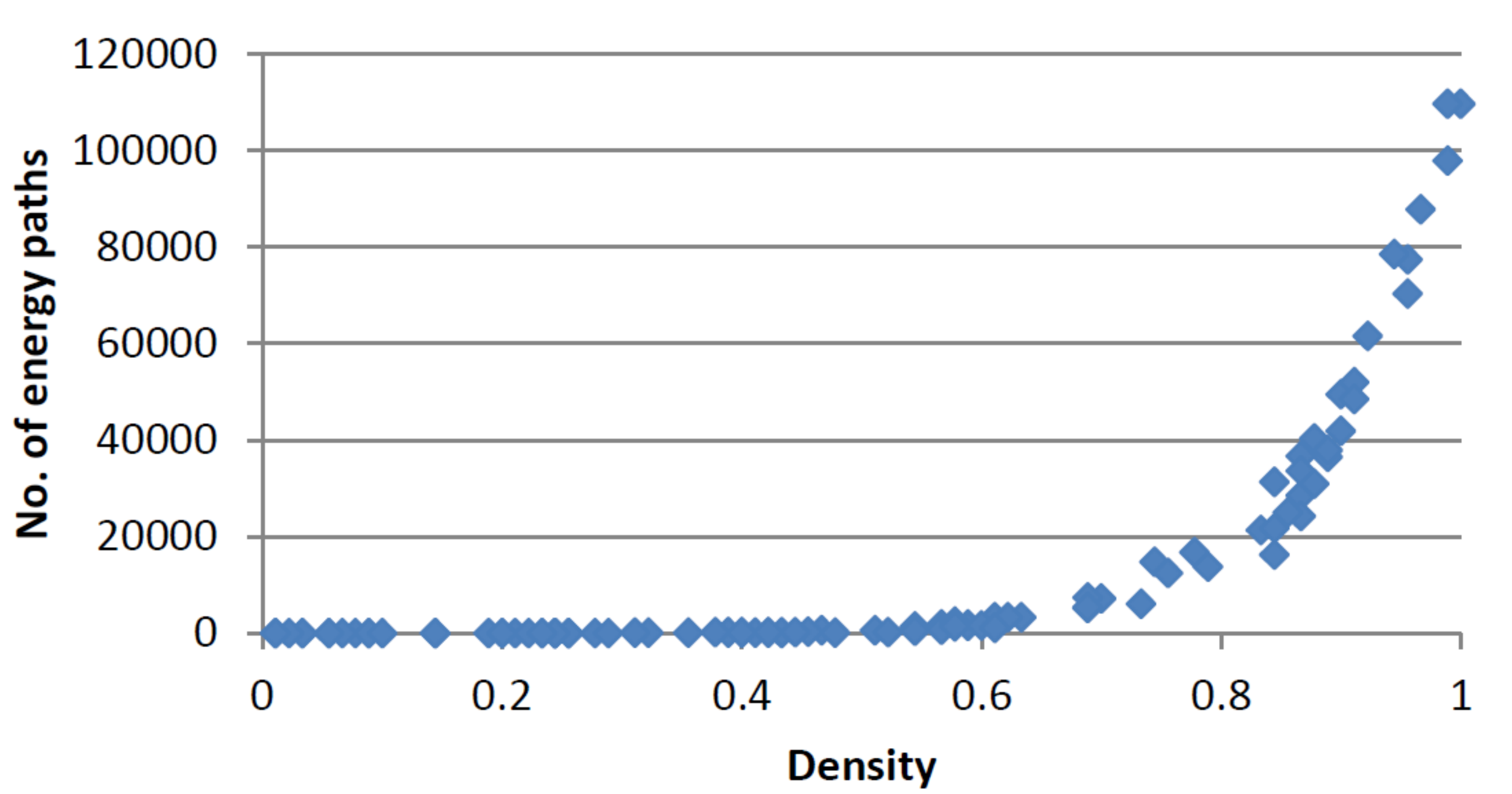}} 
	\end{center}
	\caption{Number of energy paths produced with different graph density and network size.}
  \label{fig:graphDensity}
\end{figure*}
Recall that $\tilde{G}(\tilde{\mathcal{N}},\tilde{\mathcal{A}})$ can reveal the accessibility of vehicular routes in $G(\mathcal{N},\mathcal{A})$ (see its definition in Section \ref{sec:construction}).
The longer the vehicular routes, the denser $\tilde{G}(\tilde{\mathcal{N}},\tilde{\mathcal{A}})$ . Hence we can associate the level of vehicular information disclosure to the network density of  $\tilde{G}(\tilde{\mathcal{N}},\tilde{\mathcal{A}})$, given by $\frac{|\tilde{\mathcal{A}}|}{|\tilde{\mathcal{N}}|(|\tilde{\mathcal{N}}|-1)}$. 
Since $|\mathcal{P}|$ grows very fast, we illustrate the growth with small networks only.
Fig. \ref{fig:graphDensity} shows the numbers of energy paths produced with different graph densities and network sizes. Figs. \ref{fig:rand4}--\ref{fig:rand10} correspond to networks with 4, 6, 8, and 10 nodes, respectively, each of which contains results of 100 random graphs with arbitrary source and destination pairs. We can see that $|\mathcal{P}|$ grows super-exponentially with network density. When we focus on a particular density, $|\mathcal{P}|$ grows  super-exponentially with network size as well. These confirm our analytical results related to $|\mathcal{P}|$ discussed in Lemma \ref{lm:cardP} and Theorem \ref{thm:runtime}. Therefore, when the network size is large and/or the lengths of vehicular routes are long, we generally cannot compute the whole $\mathcal{P}$ and thus Method I cannot be applied. Method II needs to be considered only when Method I is not applicable.

%


\subsection{Grid Network} \label{gridNetwork}
Next we compare the performance of the proposed methods on solving \eqref{minopt}. As Method I guarantees optimality, to evaluate the performance of Method III, we should compare Method III against Method I. To apply Method I, $\mathcal{P}$ should be manageable. We focus on a grid network of 16 road junctions, as shown in Fig. \ref{fig:grid}, where all the road connections are 10 km long with vehicles driven at 60 km/h. Suppose that there are 20 random vehicular routes, i.e., $|\mathcal{R}|=20$. We consider two cases: (i) each $r_i\in\mathcal{R}$ has an identical vehicular flow $f_i$ equal to 0.1 EVs per second;\footnote{0.1 EVs per second means that there are 0.1 participating EVs traversing the route in each second on the average.} (ii) each  $r_i$ has random flow $f_i\in[0.1,0.3]$ EVs per second. They represent different traffic conditions in a region with well-structured road network.

\begin{figure}[!t]
\centering
\includegraphics[width=1.5in]{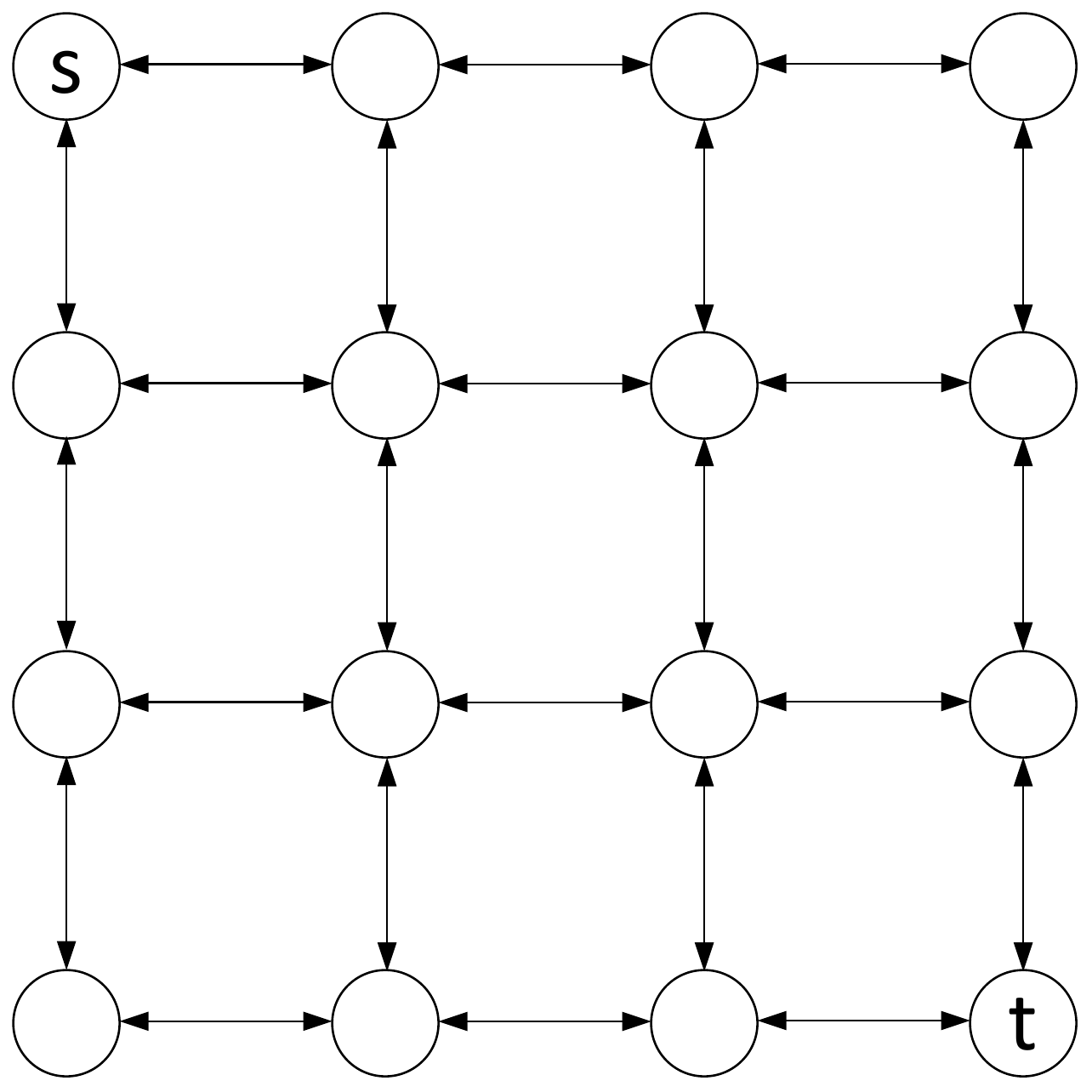}
\caption{A 16-Node Grid network.}
\label{fig:grid}
\end{figure}

Consider that we transmit energy from $s$ to $t$ indicated in Fig. \ref{fig:grid} with energy packet size of 1 kWh and 0.9 energy efficiency in a period of 5 hours, i.e., $w=1$ kWh, $z=0.9$, and $T=5$ hr. We solve \eqref{minopt} with a series of energy targets $\underline{X}$. Table \ref{case1} shows the total energy losses computed for Case (i). More energy transmission incurs more energy loss. We increase $\underline{X}$ from 1 to 1049 kWh, Methods I and III can produce exactly the same results. When $\underline{X}$ reaches 1050 kWh, the problem becomes infeasible. Table \ref{case2} shows results for Case (ii). The problem is feasible when $\underline{X}$ is smaller than or equal to 1962 kWh. For $\underline{X}\leq 900$ kWh, Method III can produce the optimal solutions. When $\underline{X}$ gets larger, Method III results in sub-optimal solutions with a little higher energy loss. When $\underline{X}$ is 1898 kWh or above, Method III can no longer produce feasible solutions. Despite this, Method III performs well when $\underline{X}$ is not too stringent. 
The performance of Method III degrades in higher $\underline{X}$ due to incorrect assignment of vehicular flow to energy paths. Method III establishes an energy path one at a time, followed by assigning its transmission rate. The path formation and rate assignment are done solely based on the properties of that energy path only. Since energy paths are inter-related, assigning vehicular flow to one energy path implies reducing the amount of available vehicular flow assignable to some other energy paths. When $\underline{X}$ is smaller, many ``good'' energy paths are available and the possibility of assigning vehicular flow to inappropriate energy paths is low. When $\underline{X}$ is higher, more energy paths are required and thus the possibility of assigning vehicular to flow to all the required energy paths is higher. When the flows of the vehicular routes vary to a greater extent, the possibility of misassignment of vehicular flow is higher. That is why Method III performs worse in Case (ii) than in Case (i). Method I is always superior in performance because it considers all possible energy paths when doing vehicular flow assignment.  
We have also tested other cases and the conclusions are similar. 

\begin{table}[!t]
\renewcommand{\arraystretch}{1.3}
\caption{Case {(i)}}
\label{case1}
\centering
\begin{tabular}{c|c|c|c}
\hline\hline
\multirow{2}{*}{$\underline{X}$}& \multicolumn{3}{c}{Total energy loss}\\ \cline{2-4}
 & Method I  & Method III & Difference (III-I)\\
\hline
1		&0.37			&0.37		&0\\
200	&74.35		&74.35	&0\\
400	&148.70		&148.70	&0\\
600	&223.05		&223.05	&0\\
800	&297.39		&297.39	&0\\
1000&	371.74	&371.74	&0\\
1010&	375.46	&375.46	&0\\
1020&	379.18	&379.18	&0\\
1030&	382.89	&382.89	&0\\
1040&	386.61	&386.61	&0\\
1049&	389.96	&389.96	&0\\
1050&	-				&	-			&	-\\
\hline\hline
\end{tabular}
\end{table}

\begin{table}[!t]
\renewcommand{\arraystretch}{1.3}
\caption{Case {(ii)}}
\label{case2}
\centering
\begin{tabular}{c|c|c|c}
\hline\hline
\multirow{2}{*}{$\underline{X}$}& \multicolumn{3}{c}{Total energy loss}\\ \cline{2-4}
 & Method I  & Method III & Difference (III-I)\\
\hline
1	&	0.52		&	0.52		&0\\
300	&	157.25	&	157.25	&0\\
600	&	314.49	&	314.49	&0\\
900	&	471.74	&	471.74	&0\\
1200&	633.61	&	650.16	&16.55\\
1500&	841.67	&	858.22	&16.55\\
1800&	1049.72	&	1066.27	&16.55\\
1897&	1116.99	&	1133.54	&16.55\\
1898&	1117.68	&	-		&-\\
1900&	1119.07	&	-		&-\\
1910&	1126.01	&	-		&-\\
1920&	1132.94	&	-		&-\\
1930&	1139.88	&	-		&-\\
1940&	1146.81	&	-		&-\\
1950&	1153.75	&	-		&-\\
1960&	1160.68	&	-		&-\\
1961&	1161.38	&	-		&-\\
1962&	1162.07	&	-		&-\\
1963&	-		&	-		&-\\
\hline\hline
\end{tabular}
\end{table}

\subsection{Real-world Scenario}

\begin{figure}[!t]
\centering
\includegraphics[width=2.7in]{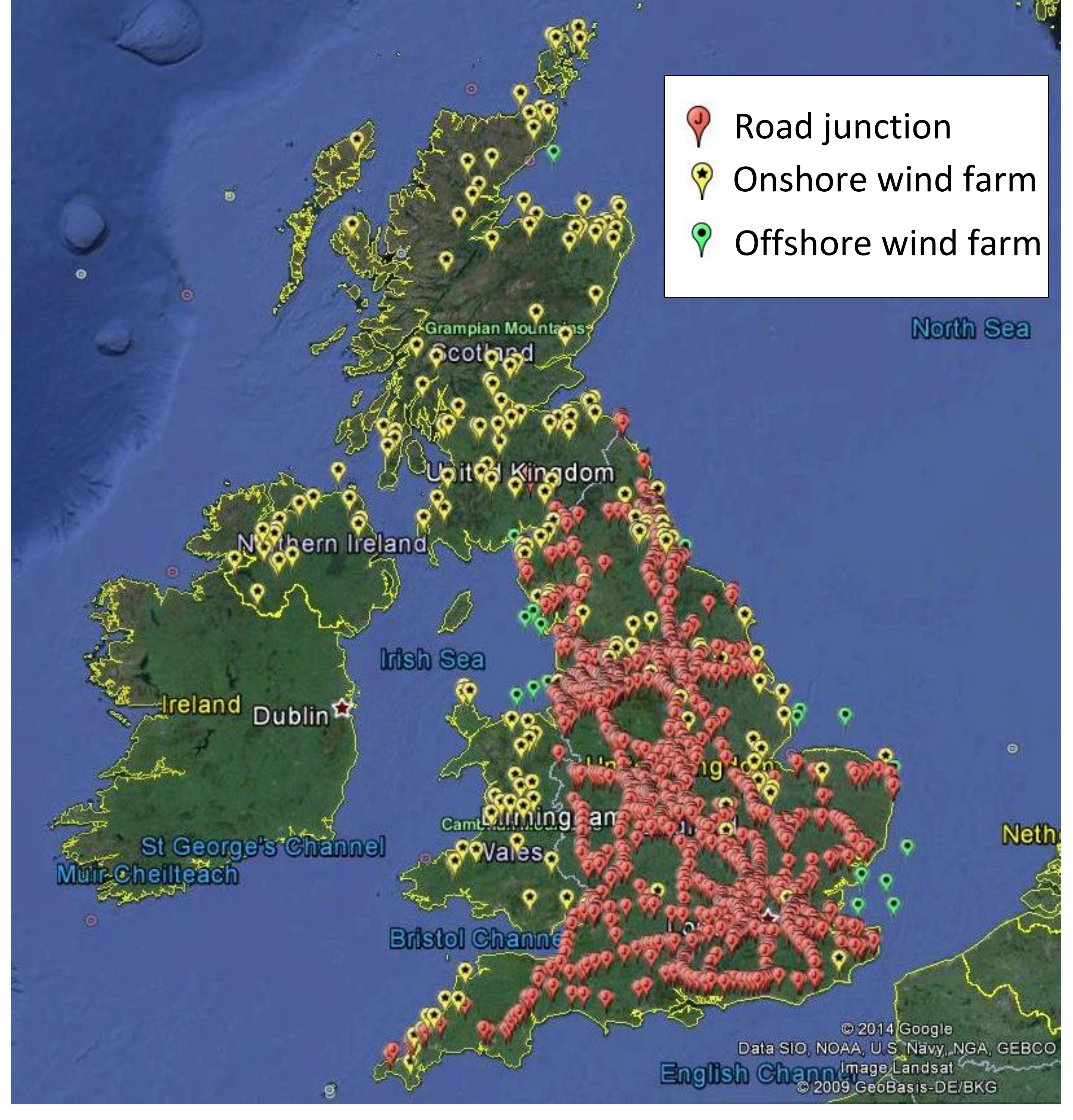} 
\caption{Locations of road junctions, and onshore and offshore wind farms \cite{VEN}.}
\label{fig:UKmap}
\end{figure}

Similar to \cite{VEN}, we study a VEN in a real-world setting. We build a VEN of 998 nodes and 2470 arcs based on  a highway network in the United Kingdom (U.K.). The road junctions and connections are configured according to the real-world data given in  \cite{ukroads}, where the locations of the road junctions are shown in Fig. \ref{fig:UKmap}. We select the traffic data of a date in June 2013 to set the travel times, vehicular speeds and flows. We randomly create 4788 vehicular routes, each of which  has a length no longer than 200 km.

U.K. has very rich wind-energy resources with annual production over $26\times 10^6$ MWh \cite{UKWindDB}. 203 onshore and 20 offshore wind farms have been built in the remote areas \cite{onshore,offshore} shown in Fig. \ref{fig:UKmap}, but there is   insufficient power infrastructure to bring the renewables online. Consider that we utilize the VEN to convey the renewables to urban areas with the objective of minimizing total energy loss. Suppose that the road junctions close to the wind farms and those located in London are the energy sources and destinations, respectively. We address \eqref{minopt} by selecting a source and a destination as $s$ and $t$. We set the energy target $\underline{X}$ to 10000 kWh and other settings are same as in  Section \ref{gridNetwork}.
Since the network is large, the whole set of $\mathcal{P}$ is not manageable and thus Method I is not applicable. Instead, we compare Methods II and III. For Method II, different random subsets of $\mathcal{P}$ are chosen for testing. Fig. \ref{fig:UKtest} illustrates the performance of Methods II and III, where each data point of Method II are the average of objective function values computed from 20 random subsets of $\mathcal{P}$.
With Method II, when the number of energy paths in the subsets of $\mathcal{P}$ increases, the total energy loss decreases. We can foresee that the total energy loss will converge to its optimal value when the number of energy paths selected approaches $|\mathcal{P}|$ (i.e., resulting in Method I). The computation time of Method II grows linearly with the number of energy paths adopted as most time is used to construct energy paths. When there are only a small number of energy paths selected, Method III can produce much better solutions than Method II and the computation time required is much lower. We can conclude that Method III is very effective at solving the power loss minimization problem.

\begin{figure}[!t]
\centering
\hspace{-1.0cm}
\includegraphics[width=3.8in]{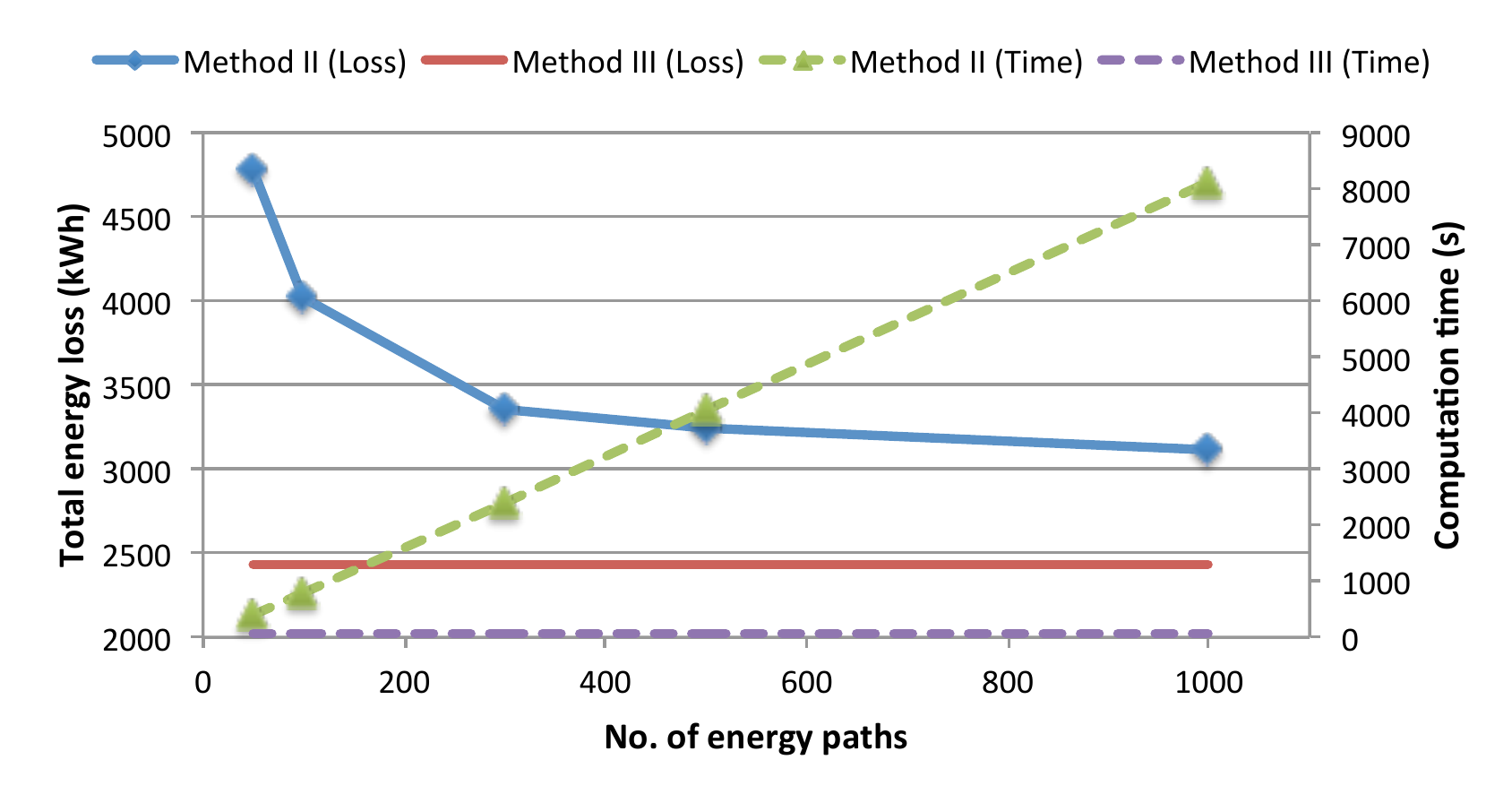}  \vspace{-1cm}
\caption{Performance of Methods II and III for the UK test case.}
\label{fig:UKtest}
\end{figure}

\subsection{Discussion} \label{sec:discussion}
We have proposed  three methods for VEN routing and they have different characteristics. We compare them in terms of four perspectives \textit{independently} in the following:

\subsubsection{Solution quality}
When $\mathcal{P}$ is manageable, Method I always performs the best as it can guarantee optimality. When addressing \eqref{minopt}, Method III outperforms Method II because the former can select energy paths out of all based on their properties while the latter depends on the quality of the selected subset of $\mathcal{P}$ given. Hence, we rank them as: I $>$ III $>$ II.

\subsubsection{Computation time}
The computation time required can be roughly measured by the number of energy paths examined. Since Method I needs to examine all energy paths, it takes the longest. Similarly, Method II examines the given subset of $\mathcal{P}$ only, it is faster than Method I. Method III only check those energy paths necessary to transmit energy but nothing more. Thus it requires the least amount of computation time.  Hence, they are ranked as: III $>$ II $>$ I.

\subsubsection{Solvable problem size}
$\mathcal{P}$  is not manageable when the network is large. Method I requires the whole $\mathcal{P}$, and thus, it cannot handle large problems. For Method II, the required number of energy paths as inputs is controllable and we can always input a manageable subset of energy paths based on the problem size. Method III only considers a sufficient number of energy paths which is not necessarily related to the problem size. So they can be ranked as: II $=$ III $>$ I.

\subsubsection{Applicability}
Since energy paths are the building blocks of VEN, as illustrated in \cite{VEN}, various VEN-related problems can be formulated in terms of optimization with $\mathcal{P}$. Methods I and II can be applied to these problems with minor modifications to the optimization formulation. However, Method III is a heuristic  tailor-made for the energy loss minimization problem only. Thus, we ranked them as: I $=$ II $>$ III.

\begin{table}[!t]
\renewcommand{\arraystretch}{1.1}
\caption{Solution method characteristic comparison}
\label{tab:comparison}
\centering
\begin{tabular}{p{2.5cm}|c | c|c}
\hline\hline
				& Method I 	& Method II & Method III 	\\
\hline
Solution quality & $\checkmark\checkmark\checkmark$		& $\checkmark$	& $\checkmark\checkmark$ \\
Computation time		& $\checkmark$	& $\checkmark\checkmark$ & $\checkmark\checkmark\checkmark$  \\
Solvable problem size &	$\checkmark$ 		& $\checkmark\checkmark$ 	& $\checkmark\checkmark$	\\
Applicability			 & $\checkmark\checkmark$ 		& $\checkmark\checkmark$ & $\checkmark$ \\
\hline\hline
\end{tabular}
\end{table}

We summarize their characteristics in Table  \ref{tab:comparison}. Methods I and III are good solution methods but  on two extremes: Method I is  comprehensive while Method III is focused. Method II is in the middle. Note that the performance of Method II can vary dramatically depending on the chosen subset of energy paths. For example, Method II can result in a good solution in short computation time so long as the chosen subset of energy paths is small but in good quality.
In the simulation above, the energy path subsets for Method II are randomly chosen but they are good enough to demonstrate the characteristics of the method. The problem of designing small energy path subsets with performance guarantee will be left as future work.


\section{Conclusion} \label{sec:conclusion}

VEN allows us to transport energy effectively across a large geographical area by means of EVs. With minimal modifications, a vehicular network can be easily converted into a VEN, which can function without changing the driving practice of the drivers. In a VEN, the energy sources and destinations are connected through a set of energy paths. This paper is dedicated to studying how to route energy over VEN by constructing energy paths opportunistically from a given set of vehicular routes. We give a method to construct all possible energy paths connecting a specified pair of energy source and destination, together with some analytical results for the method. This facilitates the determination of the optimal energy transmission schemes for various transmission objectives in the form of an LP. However, the construction of the whole set of energy paths usually requires super-exponential time. To reduce the computation time, we provide an alternative to formulate the optimization problems with a subset of energy paths,  but resulting in sub-optimality. For the power loss minimization problem, we develop a heuristic which is very efficient and capable of determining near-optimal solutions. We thoroughly test the performance of the three solution methods with artificial and real-world traffic networks. We also give a comprehensive comparison in terms of solution quality, computation time, solvable problem size, and applicability. This paper lays the foundations of VEN routing. In the future, we will improve the system performance by considering the properties of individual energy paths when only a limited number of energy paths are available. We will also extend this work for multi-source multi-destination routing.
\ifCLASSOPTIONcaptionsoff
  \newpage
\fi

\bibliographystyle{IEEEtran}
\bibliography{IEEEabrv}

\end{document}